\def\BibTeX{{\rm B\kern-.05em{\sc i\kern-.025em b}\kern-.08em
    T\kern-.1667em\lower.7ex\hbox{E}\kern-.125emX}}
\newcommand*{\rom}[1]{\expandafter\@slowromancap\romannumeral #1@}
\newtheorem{lemma}{Lemma}
\newtheorem{theorem}{Theorem}
\newtheorem{remark}{Remark}
\newtheorem{proposition}{Proposition}
\def \v x{\bm x}
\def \v x{\bm X}
\renewcommand{\v}[1]{\ensuremath{\boldsymbol{#1}}}
\newcommand{\recht}[1]{\operatorname{#1}}
\newcommand{\med}{\mathrm{med}}
\begin{document}

\setlength{\lineskip}{-0.6pt}

\title{When Is Distributed Nonlinear Aggregation Private?
Optimality and Information-Theoretical Bounds
}

\author{Wenrui Yu,~\IEEEmembership{Student Member, IEEE}, Jaron Skovsted Gundersen, Richard Heusdens,~\IEEEmembership{Senior Member, IEEE}, \\
and Qiongxiu Li,~\IEEEmembership{Member, IEEE}
\thanks{W. Yu, J. S. Gundersen and Q. Li are with the Department of Electronic Systems, Aalborg University, Denmark.}
\thanks{R. Heusdens is with Netherlands Defence Academy and Delft University of Technology, the Netherlands.}
\thanks{Emails: wenyu@es.aau.dk; jaron@es.aau.dk; r.heusdens@tudelft.nl; qili@es.aau.dk}
\thanks{This work has been submitted to the IEEE for possible publication. Copyright may be transferred without notice, after which this version may no longer be accessible.}}

\maketitle

\begin{abstract}

Nonlinear aggregation is central to modern distributed systems, yet its privacy behavior is far less understood than that of linear aggregation. Unlike linear aggregation where mature mechanisms can often suppress information leakage, nonlinear operators impose inherent structural limits on what privacy guarantees are theoretically achievable when the aggregate must be computed exactly. This paper develops a unified information-theoretic framework to characterize privacy leakage in distributed nonlinear aggregation under a joint adversary that combines passive (honest-but-curious) corruption and eavesdropping over communication channels. 

We cover two broad classes of nonlinear aggregates: order-based operators (maximum/minimum and top-$K$) and robust aggregation (median/quantiles and trimmed mean). We first derive fundamental lower bounds on leakage that hold without sacrificing accuracy, thereby identifying the minimum unavoidable information revealed by the computation and the transcript. We then propose simple yet effective privacy-preserving distributed algorithms, and show that with appropriate randomized initialization and parameter choices, our proposed approaches can attach the derived optimal bounds for the considered operators. Extensive experiments validate the tightness of the bounds and demonstrate that network topology and key algorithmic parameters (including the stepsize) govern the observed leakage in line with the theoretical analysis, yielding actionable guidelines for privacy-preserving nonlinear aggregation.

\end{abstract}

\begin{IEEEkeywords}nonlinear aggregation, distributed optimization, privacy, information-theoretical analysis, adversary 
\end{IEEEkeywords}

\section{Introduction}
Distributed data aggregation has become a fundamental paradigm powering modern intelligent systems.
From real-time environmental and health monitoring in sensor networks~\cite{hu2018event,rezaeibagha2020secure} to large-scale machine learning tasks involving decentralized datasets~\cite{pillutla2022robust}, diverse sources continuously generate data that must be integrated to extract global patterns and support decision-making. While distributed aggregation enables efficiency, robustness, and scalability, it also raises a critical concern: aggregated information may inadvertently reveal sensitive user data, especially in the presence of adversaries. Such privacy leakage can compromise user trust and create regulatory risks, making privacy-preserving aggregation a central challenge in distributed systems.   

Although linear aggregation methods such as sums and averages have been extensively studied, many emerging applications require more expressive nonlinear aggregation. These include order-based operators (e.g., maximum, minimum, top-
$k$) that depend on the relative ordering of inputs, as well as robust statistical aggregators (such as median and trimmed mean) designed to suppress the influence of noise, outliers, or adversarial updates. Such nonlinear operators play a central role in modern distributed systems but introduce privacy challenges fundamentally different from linear aggregation.

Most existing work on privacy-preserving distributed aggregation focuses on \emph{linear} tasks such as average consensus~\cite{mo2016privacy,he2018privacy,wang2019privacy}. Differential-privacy approaches~\cite{huang2015differentially,nozari2018differentially,zhang2016dynamic,zhang2018improving,xiong2020privacy} provide formal guarantees via noise insertion but typically degrade the output accuracy. Other lines of work include subspace-perturbation methods~\cite{Jane2020ICASSP,Jane2020LS,li2020privacy}, and cryptographic approaches based on secure multi-party computation (SMPC)~\cite{gupta2017privacy,li2019privacyA,tjell2020privacy,tjell2019privacy,xu2015secure,li2019privacyS,shoukry2016privacy,zhang2019admm} which largely rely on a linear structure and cannot be straightforwardly adapted to nonlinear aggregation. Only a few studies directly address nonlinear functions; for instance, \cite{ruan2019secure} designs an SMPC protocol for max/min consensus, but it can incur high computational and communication overhead.

We previously conducted two preliminary studies on individual nonlinear operators under specific assumptions: a lossless maximum consensus method formulated via augmented-graph optimization~\cite{yu2025privacy}, and an initial characterization of privacy conditions for exact median consensus~\cite{yu2025optimal}. These studies focused on individual operators and served as early steps toward understanding privacy in nonlinear aggregation.   In this paper, we advance beyond these isolated cases by developing a unified analytical framework that characterizes the privacy behavior of general nonlinear aggregation functions. Our framework clarifies how structural properties of different operator classes determine their fundamental privacy limits and yields the conditions under which accuracy-preserving privacy is attainable.  Our contributions are summarized as:
\begin{itemize}
\item \textbf{Information-theoretic limits:} We derive fundamental lower bounds on privacy leakage for exact nonlinear aggregation and identify when perfect privacy is theoretically achievable.
\item \textbf{Operator-class characterization:} We analyze how the structural properties of order-based and robust-statistical operators determine their intrinsic privacy behavior.
\item \textbf{Algorithmic optimality:} For canonical cases, we establish the conditions under which PDMM-type algorithms can attain the theoretical privacy limits, supported by empirical validation.
\item \textbf{Key factors influencing privacy:} We analyze and verify how network topology, initialization choices, and algorithmic parameters affect achievable privacy, in alignment with our theoretical analysis.
\end{itemize}

\subsection{Outline and notations}

The structure of this paper is as follows. Section~\ref{sec.pre} presents the preliminaries, Section~\ref{sec.ideal} introduces the fundamental lower bound on leakage, Section~\ref{sec.order} analyzes privacy in order-based aggregations, Section~\ref{sec.robust} analyzes privacy in robust statistical aggregations, Section~\ref{sec.experiment} presents experimental validation, and finally, we conclude in Section~\ref{sec.conclusion}.

We use lowercase letters $(x)$ for scalars, lowercase boldface letters $(\boldsymbol{x})$ for vectors, uppercase boldface letters $(\boldsymbol{X})$ for matrices and uppercase letters $(X)$ for random variables whose realizations are scalars denoted by $x$. $|\cdot|$ denotes the cardinality of a set and $(\boldsymbol{X})^{\top}$ denote the transpose of $\boldsymbol{X}$. $||\cdot||_p$ denotes the $L_p$ norm and $||\cdot||$ denote the $L_2$ norm by default.
$(\overline{\cdot})$ denotes the averaging of the elements in the vector.
$(\cdot)^{(t)}$ denotes the value at iteration $t$.

\section{Preliminaries}\label{sec.pre}

\subsection{Distributed optimization}

The network is modeled as an undirected graph $\mathcal{G} = (\mathcal{V}, \mathcal{E})$, where $\mathcal{V} = \{1, \ldots, n\}$ denotes the set of vertices (nodes) representing network participants. The edge set $\mathcal{E} \subseteq \mathcal{V} \times \mathcal{V}$ defines the pairwise communication links between nodes. For each node $i \in \mathcal{V}$, its neighborhood $\mathcal{N}_i = \{j \in \mathcal{V} \mid (i,j) \in \mathcal{E}\}$ consists of adjacent nodes, with $d_i = |\mathcal{N}_i|$ denoting its degree.

The distributed optimization problem is formulated as
\begin{equation}
\begin{aligned}
& \underset{\{\v x_i:{i \in \mathcal{V}}\}}{\text{minimize}} 
& & \sum_{i \in \mathcal{V}} f_i(\v x_i), \\
& \text{subject to}
& & \v A_{ij} \v x_i + \v A_{ji} \v x_j = \v b_{i,j}, \quad \forall (i,j) \in \mathcal{E},
\end{aligned}
\label{eq:problem}
\end{equation}
where $\v x_i \in \mathbb{R}^d$ is the local variable, and $f_i: \mathbb{R}^d \to \mathbb{R}$ is a convex, closed, and proper (CCP) objective function that encapsulates the private data $\v s_i \in \mathbb{R}^d$. $\v A_{ij}, \v A_{ji} \in \mathbb{R}^{m \times d}$ and $\v b_{i,j} \in \mathbb{R}^m$ define coupling constraints between adjacent nodes.

To address the optimization problem in \eqref{eq:problem}, we employ the Primal-Dual Method of Multipliers \cite{zhang2017distributed,sheron2018derivation}. The iterative update equations for a given node $i \in \mathcal{V}$ are formulated as
\begin{align}
    \nonumber \v x_i^{(t)}&=\arg \min _{\v x_i}\bigg(f_i(\v x_i)\\ 
     & +\sum_{j \in \mathcal{N}_i}\bigg(\v z_{i \mid j}^{(t)\top} \v A_{ij} \v x_i+\frac{c}{2}\left\|\v A_{ij} \v x_i-\frac{1}{2} \v b_{i, j}\right\|^2\bigg)\bigg),\label{eq.x_update}\\
    \v z_{i\mid j}^{(t+1)}&=(1-\theta)\v z_{i\mid j}^{(t)}+\theta (\v z_{i \mid j}^{(t)}+2 c\v A_{ij} \v x_i^{(t)}- \v b_{i, j}), \label{eq.z_update} 
\end{align}
where $\v y$ and $\v z$ are auxiliary variables, $\theta \in (0,1)$ is an averaging constant, and $c > 0$ is a constant controlling the convergence rate (a larger $c$ gives a smaller step-size, hence a smaller convergence rate). When the objective function is uniformly convex, the algorithm converges even for $\theta = 1$ (standard PDMM) \cite{heusdens2024distributed}. For non-uniformly convex problems, we primarily analyze the case where $\theta = \frac{1}{2}$. This choice corresponds to the $\frac{1}{2}$-averaged version of PDMM, which is equivalent to ADMM~\cite{sheron2018derivation}.

\subsection{Adversary model}\label{subsec.adv}
We analyze two prevalent adversary models in the context of graph-based systems. The first one is the passive adversary, also referred to as the honest-but-curious model. In this setting, corrupted nodes adhere strictly to the prescribed protocol but may collude with each other, exchanging information to infer sensitive data. 

The second adversary model is the eavesdropping adversary, which possesses the ability to monitor all communications transmitted over unsecured channels. These two adversarial models are assumed to operate in concert, combining internal collusion and intercepted external communications to reconstruct the private data of honest nodes.

Throughout the paper we assume that inter-node messages are transmitted over unsecured channels, so the eavesdropping adversary observes the full communication transcripts. We do not rely on encrypted channels or SMPC-style cryptographic protections, as such mechanisms introduce additional key-management assumptions and may incur substantial computation/communication overhead in iterative protocols.

\subsection{Evaluation metrics}\label{ss.eval}
The performance of the proposed algorithm is rigorously evaluated with respect to two fundamental criteria: \textit{accuracy}, which measures the correctness of the computational results, and \textit{privacy}, which quantifies the protection of sensitive information. 

\noindent\textbf{Output accuracy}: We want to assess the proximity between the optimization outcomes of the privacy-preserving algorithm and those achieved by the original non-privacy-preserving algorithms. The accuracy is measured by the Mean Squared Error (MSE)
\begin{align}\label{eq.mse}
    \frac{1}{|\mathcal{V}|}\sum_{i\in\mathcal{V}}\|\v x_i^{(t_{\recht{max}})}- \v x_i^{*}\|^{2},
\end{align}
where $t_{\recht{max}}$ denotes the maximum number of iterations and $\v x_i^*$ the optimal solution on node $i$ to the optimization problem. For consensus problems, we have $\boldsymbol{x}_i^* = \boldsymbol{x}_j^*$ for all $(i,j) \in \mathcal{E}$.
We therefore denote the common consensus value by $\boldsymbol{x}^*$.

\noindent\textbf{Individual privacy}:
 We adopt mutual information \cite{cover2012elements} as the metric for assessing individual privacy as it is shown to be effective in the literature \cite{duchi2013local,kairouz2014extremal,li2022privacy}. Given the random variable $S_i$ 
representing the private data at node $i$ and $\mathcal{O}$ representing the total information that the adversary can observe, the  mutual information $I(S_i;\mathcal{O})$  measures the amount of information learned about $S_i$ by observing $\mathcal{O}$, which is given by
\begin{equation}
    \nonumber I(S_i;\mathcal{O})=h(S_i)-h(S_i\mid \mathcal{O}),
\end{equation}
where $h(\cdot)$ denotes differential entropy, assuming it exists.
When $I(S_i;\mathcal{O})=h(S_i)$, the adversary has sufficient information to fully infer $\v s_i$. When $I(S_i;\mathcal{O})=0$, the adversary cannot infer any information about  $S_i$ given the available information ${\mathcal O}$. Let $\mathcal{V}_c$ denote the set of corrupted nodes corrupted by a passive adversary and let $\mathcal{V}_h = \mathcal{V}\,\backslash\, \mathcal{V}_c$ represents the set of honest nodes in the network. 
$\mathcal{O}$ consists of the inputs and outputs of $\mathcal{V}_c$ and all the random choices taken at the corrupted nodes. When considering eavesdropping, $\mathcal{O}$ also contains all messages sent between any parties.

\subsection{Nonlinear aggregation formulation}\label{ss.formulation}
We consider distributed aggregation of local values $\{s_i\}_{i\in\mathcal{V}}$ and model the desired output as 
\begin{align}\label{eq:agg_def}
    x^* = f\!\left(\{s_i\}_{i\in\mathcal{V}}\right),
\end{align}
where $f(\cdot)$ can be linear or nonlinear\footnote{For simplicity, we use a scalar representation throughout the coordinate-wise cases in this paper, although the formulation can be straightforwardly extended to the vector case.}.

A typical linear example can be the weighted average, i.e., $x^*=\frac{1}{|\mathcal{V}|}\sum_{i\in\mathcal{V}} w_i s_i,$
while nonlinear examples include 
\begin{align*}\label{eq:nonlinear_examples}
    x^*=\operatorname{max}_{i\in\mathcal{V}}\{s_i\},\qquad
    x^*=\operatorname{med}_{i\in\mathcal{V}}\{s_i\},
\end{align*}
where $\operatorname{med}(\cdot)$ denotes the median.

These (non)linear operators can be formulated as distributed optimization problems. For example, maximum consensus can be formulated by introducing indicator functions\footnote{
Throughout this paper, $\mathcal{I}_A(x)=\infty$ if $x\in A$ and $0$ otherwise,
while $\mathbb{I}_A(x)=1$ if $x\in A$ and $0$ otherwise.
}~\cite{venkategowda2020privacy} 
\begin{equation*}
\begin{array}{ll}
\underset{\{x_i: i \in \mathcal{V}\}}{\text{minimize}} 
& {\displaystyle \sum_{i\in \mathcal V} \left(x_i + \mathcal{I}_{\{x_i \geq s_i\}}(x_i)\right)}, \\[3mm]
\text{subject to} 
& x_i - x_j = 0, \quad (i,j)\in \mathcal{E}.
\end{array}
\label{eq:max_primal}
\end{equation*}
The consensus constraints enforce that all $x_i$ converge to a common value. Under A/PDMM, this corresponds to setting $a_{ij}x_i+a_{ji}x_j=b_{ij}$ where $a_{ij} = -a_{ji} = 1$ for $i<j$ and $b_{ij}=0$.

Similarly, the median consensus problem can also be formulated in a distributed manner as~\cite{deplano2023unified,yu2025optimal} 
\begin{equation*}
\begin{array}{ll}
\underset{\{x_i: i\in\mathcal{V}\}}{\text{minimize}} 
& {\displaystyle \sum_{i\in \mathcal V} \|x_i - s_i\|_1}, \\[3mm]
\text{subject to} 
& x_i - x_j = 0, \quad (i,j)\in \mathcal{E}.
\end{array}
\end{equation*}
Here the objective is the $L_1$ norm, reflecting the deviation from the median operator.

Additional nonlinear aggregation formulations will be introduced in subsequent sections.

\section{Optimal privacy bounds in the ideal world}\label{sec.ideal}

This section derives information-theoretic privacy benchmarks under the requirement of \emph{perfect accuracy}: the distributed protocol outputs the exact nonlinear aggregate, i.e., there is no accuracy loss. In this regime, privacy cannot in general be made arbitrarily strong. Instead, we characterize the \emph{minimum unavoidable leakage} implied by the function itself together with the adversary model.

\subsection{Ideal world in SMPC}
The notion of minimum unavoidable leakage is closely related to the \emph{ideal-world} paradigm in SMPC~\cite{Cramer2015}. In the ideal world, a trusted third party computes the desired aggregate and returns only the output. Against a passive (honest-but-curious) adversary, this means the adversary should learn nothing beyond what is implied by the output and the inputs of corrupted nodes. 

Using the aggregation output $x^*$ defined in \eqref{eq:agg_def}, the classical ideal-world leakage given a set of passive corrupt nodes $\mathcal{V}_c$ is
\begin{align}\label{eq:Classical_ideal}
    I\!\left(S_i;\ X^*,\ \left\{S_j\right\}_{j\in\mathcal{V}_c}\right),
    \quad i\in \mathcal{V}_h .
\end{align}

\subsection{Ideal-world under passive corruption and eavesdropping}
Recall the adversary model in Section~\ref{subsec.adv}, where passive corruption is combined with a global eavesdropper, the adversary may learn additional information that is not implied by $(X^*,\{S_j\}_{j\in\mathcal{V}_c})$. Consequently, when we refer to an ``ideal-world'' privacy bound, we mean the minimum leakage consistent with both passive corruption and transcript eavesdropping.

In the following, we identify this unavoidable additional leakage for two canonical nonlinear aggregation examples: maximum (order-based aggregation) and median (robust aggregation). The same approach can easily extend to other applications such as quantiles.

\subsection{Optimal privacy bound in maximum consensus}\label{subsec.ideal.max}
Consider the maximum consensus $X^*=S_{\max}=\max_{j\in\mathcal{V}}\{S_j\}$. Under global transcript access, any correct distributed computation of $S_{\max}$ necessarily reveals which node attains the maximum.
This information is not determined in general by $(S_{\max},\{S_j\}_{j\in\mathcal{V}_c})$ and therefore must be included in the ideal-world benchmark under our adversary model.

To do so, we define the maximizer membership variables
\[
\mathbb{I}_{\{S_j = S_{\max}\}}(S_j),
\]
where \(\mathbb{I}_{\{S_j = S_{\max}\}}(S_j)= 1\) if node \(j\) attains the maximum, and \(0\) otherwise.

\begin{proposition}[Unavoidable leakage for maximum consensus]\label{prop:max_leak}
    Let $\mathcal{G}=(\mathcal{V},\mathcal{E})$ be an undirected graph. Assume a distributed algorithm is carried out such that each node outputs $s_{\max}=\max_{j\in\mathcal{V}} s_j$ at termination, where node $j$ holds $s_j$. Under the adversary model of Section~\ref{subsec.adv} (passive corruption with $\mathcal{V}_c\neq\emptyset$ and global eavesdropping), the adversary can determine the maximizer membership variables $\{\mathbb{I}_{\{S_j = S_{\max}\}}(S_j)\}_{j\in\mathcal{V}}$ from its observation.

    Consequently, for every honest node $i\in\mathcal{V}_h$, if $\mathcal{O}$ denotes the adversary's total observation, the leakage satisfies the information-theoretic lower bound
    \begin{align}\label{eq:ideal_max_bound}
        I(S_i;\mathcal{O})
        \ \ge\
        I\!\left(S_i;\ S_{\max},\ \left\{S_j\right\}_{j\in\mathcal{V}_c},\ \left\{\mathbb{I}_{\{S_j = S_{\max}\}}(S_j)\right\}_{j\in \mathcal{V}}\right).
    \end{align}
\end{proposition}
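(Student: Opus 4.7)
The plan is to split the proposition into two sub-claims: (i) that the indicator tuple $\{\mathbb{I}_{\{S_j=S_{\max}\}}(S_j)\}_{j\in\mathcal{V}}$ is a deterministic function of the adversary's observation $\mathcal{O}$; and (ii) that this determinism, together with the data processing inequality, implies the stated bound.

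For (i), I would partition the nodes into $\mathcal{V}_c$ and $\mathcal{V}_h$. For any $j\in\mathcal{V}_c$, the input $s_j$ belongs to $\mathcal{O}$ by the definition of the passive adversary, and $s_{\max}$ is the protocol output observable at any corrupted node, so the indicator is directly computable. For any honest $j\in\mathcal{V}_h$, I would argue by an indistinguishability/perturbation argument: if two input profiles agreed on $\mathcal{V}_c$, induced identical transcripts, but disagreed on $\mathbb{I}_{\{s_j=s_{\max}\}}$, then by correctness both profiles must terminate with the same claimed $s_{\max}$ at every node. Targeting the profile in which $j$ is the unique maximizer and perturbing $s_j$ strictly downward strictly reduces the true maximum; the only way for other honest nodes to still output the original $s_{\max}$ would be for the value to have entered the transcript through node $j$'s outgoing messages, so the transcript must actually change, contradicting the assumed indistinguishability. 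Iterating this argument across the tied-maximizer set and conditioning on exposed protocol randomness extends the conclusion to non-unique maximizers and randomized protocols.

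For (ii), once (i) is established, the triple $\bigl(S_{\max},\{S_j\}_{j\in\mathcal{V}_c},\{\mathbb{I}_{\{S_j=S_{\max}\}}(S_j)\}_{j\in\mathcal{V}}\bigr)$ is a deterministic function of $\mathcal{O}$ (since $S_{\max}$ is observable as the output at any corrupted node and the corrupted inputs are in $\mathcal{O}$ by definition), and the data processing inequality immediately yields inequality~\eqref{eq:ideal_max_bound}.

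The main obstacle is (i) at honest nodes. The perturbation argument is delicate because it must exclude ``confused'' scenarios where several honest nodes share the maximum, which requires iterating the perturbation over the tied-maximizer set; and for randomized protocols it requires carefully conditioning on the joint randomness revealed through the transcript so that transcripts remain comparable across input profiles. For the PDMM-style formulations emphasized later in the paper, identification is typically much more direct---e.g.\ through the first appearance of $s_{\max}$ in an honest node's outgoing message, or through the active indicator constraints at convergence---but the general perturbation argument is what secures the bound for \emph{any} correct transcript-visible protocol.
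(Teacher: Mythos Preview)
Your proposal follows essentially the same route as the paper: a transcript-consistency/perturbation argument (the paper phrases it as the adversary re-running the protocol with a trial value $\tilde{s}_k$ for one honest node while holding all observed messages fixed, and using output correctness to show that only $\tilde{s}_k=s_{\max}$ can be consistent when $k$ is a maximizer) to establish (i), followed by the data-processing inequality for (ii). You are in fact more explicit than the paper about the tied-maximizer case, about conditioning on protocol randomness, and about the data-processing step, all of which the paper treats informally or leaves implicit.
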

The proof is given in Appendix~\ref{app:proofs_max}. Motivated by Proposition~\ref{prop:max_leak}, we use the right-hand side of \eqref{eq:ideal_max_bound} as the ideal-world benchmark for maximum aggregation.

\subsection{Median aggregation}\label{subsec.ideal.med}
Consider $S_{\med}=\mathrm{med}_{j\in\mathcal{V}}\{S_j\}$. Under global transcript access, median computation reveals additional order information beyond $(S_{\med},\{S_j\}_{j\in\mathcal{V}_c})$: besides the identity of the median holder, the adversary learns, for each node, whether its value is smaller than, greater than, or equal to the median. This side information is therefore part of the ideal-world benchmark under our adversary model.

Introduce the membership variables
\[
\Big\{\mathbb{I}_{\{S_j \leq S_{\med}\}}(S_j),\mathbb{I}_{\{S_j \geq S_{\med}\}}(S_j)\}\Big\}_{j\in\mathcal{V}}.
\]
Note that the pair $(\mathbb{I}_{\{S_j \leq S_{\med}\}}(S_j),\mathbb{I}_{\{S_j \geq S_{\med}\}}(S_j))$ distinguishes $S_j<S_{\med}$, $S_j>S_{\med}$, and $S_j=S_{\med}$ (where both indicators equal $1$). 

\begin{proposition}[Unavoidable leakage for median consensus]\label{prop:med_leak}
    Let $\mathcal{G}=(\mathcal{V},\mathcal{E})$ be an undirected graph. Assume a distributed algorithm is carried out such that each node outputs $s_{\med}=\mathrm{med}_{j\in\mathcal{V}} s_j$ at termination, where node $j$ holds $s_j$. Under the adversary model of Section~\ref{subsec.adv} (passive corruption with $\mathcal{V}_c\neq\emptyset$ and global eavesdropping), the adversary can determine, for every $j\in\mathcal{V}$, whether $S_j<S_{\med}$, $S_j>S_{\med}$ or $S_j=S_{\med}$ (captured by $\mathbb{I}_{\{S_j \leq S_{\med}\}}(S_j)$ and $\mathbb{I}_{\{S_j \geq S_{\med}\}}(S_j)$).

    Consequently, for every honest node $i\in\mathcal{V}_h$, the leakage satisfies
    \begin{align}\label{eq:ideal_med_bound}
        I(S_i;\mathcal{O})
        \ \ge\
        I\!\Big(&S_i;\ S_{\med},\ \left\{S_j\right\}_{j\in\mathcal{V}_c},\nonumber\\
        &\left\{\mathbb{I}_{\{S_j \leq S_{\med}\}}(S_j),\mathbb{I}_{\{S_j \geq S_{\med}\}}(S_j)\right\}_{j\in\mathcal{V}}\Big).
    \end{align}
\end{proposition}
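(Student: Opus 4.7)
The plan is to mirror the argument behind Proposition~\ref{prop:max_leak}, replacing the two-way maximizer/non-maximizer split by a three-way comparison ($<$, $=$, $>$). My goal is to show that $S_{\med}$, $\{S_j\}_{j\in\mathcal{V}_c}$, and the full family of indicators $\{\mathbb{I}_{\{S_j\leq S_{\med}\}}(S_j),\mathbb{I}_{\{S_j\geq S_{\med}\}}(S_j)\}_{j\in\mathcal{V}}$ are all deterministic functions of the adversary's observation $\mathcal{O}$; the data processing inequality then directly yields~\eqref{eq:ideal_med_bound}.

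First I would dispose of the easy components. Because every node terminates with $s_{\med}$ and $\mathcal{V}_c\neq\emptyset$, the value $S_{\med}$ lies inside $\mathcal{O}$ as part of any corrupted node's local view. The inputs $\{S_j\}_{j\in\mathcal{V}_c}$ are contained in $\mathcal{O}$ by definition of passive corruption, so for each $j\in\mathcal{V}_c$ both indicator values are obtained by a direct comparison of $s_j$ with $s_{\med}$.

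The substantive step is to argue that, for each honest $j\in\mathcal{V}_h$, the eavesdropped transcripts together with the corrupted coalition's state pin down the sign of $s_j-s_{\med}$. For the canonical $L_1$-ADMM/PDMM formulation introduced in Section~\ref{ss.formulation} this is transparent: the subgradient update at $j$ explicitly contains $\operatorname{sign}(s_j-x_j^{(t)})$, and since $x_j^{(t)}\to s_{\med}$, the tail of the transcript reveals $\operatorname{sign}(s_j-s_{\med})$ and hence both indicators (the boundary case $s_j=s_{\med}$ is witnessed by the subdifferential collapsing to a non-singleton, which shows up in the asymptotic iterate dynamics). For a generic correct protocol over non-encrypted channels I would run a switching/coupling argument: two input instances that agree on $s_{\med}$ and on $\{s_j\}_{j\in\mathcal{V}_c}$ but place some honest $j$ on opposite sides of $s_{\med}$ must induce different observations, for otherwise the adversary could replay the same execution on either instance and derive a contradiction with the protocol's dependence on $s_j$ under correctness.

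Once this reducibility is established, $S_i\to\mathcal{O}\to g(\mathcal{O})$, with $g(\mathcal{O})$ being the tuple on the right-hand side of~\eqref{eq:ideal_med_bound}, is a trivial Markov chain and the data processing inequality closes the bound. The main obstacle is formalizing the generic switching argument in the substantive step: the ADMM/PDMM case is clean because $\operatorname{sign}$ appears explicitly in the iterates, but lifting the conclusion to an arbitrary correct protocol requires a careful treatment of the equality case $s_j=s_{\med}$ (where both indicators simultaneously equal $1$) and of any public randomness used by the algorithm. I expect the proof of Proposition~\ref{prop:max_leak} in Appendix~\ref{app:proofs_max} to furnish the template, adapted from the binary maximizer/non-maximizer partition to the ternary below/equal/above partition needed here.
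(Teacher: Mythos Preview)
Your approach is essentially the paper's: its proof of Proposition~\ref{prop:med_leak} merely invokes the re-run/consistency test from Appendix~\ref{app:proofs_max}, noting that a trial value $\tilde{s}_j$ on the opposite side of the median (or any $\tilde{s}_j\neq s_j$ for the median holder) would shift the median and hence cannot reproduce the observed transcript. Your ADMM-specific detour is unnecessary since the proposition is stated for an arbitrary correct protocol, and the concerns you flag about the equality case and randomness are not elaborated in the paper's own (very terse) argument---your generic switching sketch already matches what the paper actually proves.
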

\begin{proof}
    See Appendix~\ref{app:proofs_med}.
\end{proof}

Motivated by Proposition~\ref{prop:med_leak}, we use the right-hand side of \eqref{eq:ideal_med_bound} as the ideal-world benchmark for median aggregation under the adversary model of Section~\ref{subsec.adv}.

\subsection{Discussion and extension}
The privacy bounds above will be used as the privacy target in later sections: under perfect accuracy and the adversary model of Section~\ref{subsec.adv}, a protocol is information-theoretically optimal if its leakage matches the corresponding ideal world privacy bounds. The same methodology extends to other order statistics (e.g., quantiles) by incorporating the appropriate rank/side-information indicators implied by transcript access, i.e. if $x^*$ is the output of a distributed quantile computation, the adversary will learn $\mathbb{I}_{\left\{S_j\leq X^*\right\}}$ and $\mathbb{I}_{\left\{S_j\geq X^*\right\}}$ for all nodes $j$. Similarly, we can extend our ideal maximum to top-$K$, since the adversary under our model always will learn which nodes possess the top-$K$ values when a distributed computation of this is carried out. This can be shown by similar arguments as for the maximum case. Assuming that the trimmed mean outputs both the average and top-$K$ maximum and minimum the ideal leakage can be obtained by combining the ideal leakage of these three functionalities.

\section{Order-based aggregation}\label{sec.order}
This section studies \emph{order-based} nonlinear aggregation, where the aggregate is determined by the relative ordering of inputs rather than by linear combinations. We focus on three classical tasks: maximum, minimum, and top-$K$. Although median and quantiles are also order-based, we defer them to Section~\ref{sec.robust} because they serve as core primitives in robust statistical aggregation and their privacy behavior differs qualitatively from extrema.

\subsection{Maximum consensus\label{sec.maximum}}
We begin with distributed maximum consensus. Beyond applications such as leader selection, detection, and monitoring, maximum consensus is also a canonical example for privacy analysis: as shown in the ideal-world discussion in Section~\ref{sec.ideal}, under eavesdropping one cannot in general hide \emph{which node attains the maximum}. The goal here is therefore not to eliminate all leakage, but to design and analyze a protocol whose leakage matches this ideal-world benchmark as closely as possible.

\subsubsection{Proposed approach}
The maximum consensus problem is formulated in Section~\ref{ss.formulation} and solved via a PDMM-type primal-dual iteration. The privacy mechanism is implemented through the \emph{random initialization} of the edge-associated dual variables $\{z_{i|j}^{(0)}\}$, which appear in the left term inside the max operator in \eqref{eq.xi_up}. Intuitively, when the initialization is chosen appropriately, this term can remain larger than $s_i$ for non-maximizer nodes, so that the max operator never selects $s_i$ during the run. 

The update rules for the primal variable $x_i$ and the dual variables $z_{j|i}$ are given by~\cite{deplano2023unified}
\begin{equation}\label{eq.xi_up}x_{  i}^{(t)} =\max\left(\frac{ -1- \sum_{  j \in {\mathcal N}_i}  a_{ij}z_{  i|j}^{(t)}}{cd_i},s_i\right)\end{equation}
                    \begin{equation}\label{eq.zji_up}z_{  j|i}^{(t+1)}
        =\frac{1}{2}z_{j\mid i}^{(t)}+\frac{1}{2}(z_{  i|j}^{(t)}+2c a_{ij}x_i^{(t)})
                \end{equation}
Using these updating functions we present our privacy-preserving maximum consensus algorithm in Algorithm~\ref{alg:max_primal}.
\begin{algorithm}[ht]
  \caption{Privacy-preserving maximum consensus}
  \label{alg:max_primal}
  \begin{algorithmic}
     \ForAll{ $i \in \mathcal{V}, j \in \mathcal{N}_i$,}
     \State Randomly initialize $z_{i\mid j}^{(0)}\sim\mathcal{N}(\pm\mu_z,\sigma_z^2)$ 
     \Comment{Initialization}
     \State $\text{Node}_j \leftarrow \text{Node}_i(z_{  i|j}^{(0)})$
     \EndFor
          \For{$t=0,1,\cdots,t_{\max}-1$} 
            \ForAll{$i \in \mathcal{V}$ }
            \State Update $x_{  i}^{(t)}$ with \eqref{eq.xi_up}
            \State Update $\forall j\in \mathcal{N}_i: z_{  j|i}^{(t+1)}$ with \eqref{eq.zji_up}
                \State $\text{Node}_{j\in \mathcal{N}_i} \leftarrow \text{Node}_i(x_i^{(t)})$\Comment{Broadcast}
                \ForAll{ $j \in \mathcal{N}_i$} 
        \State Update $z_{  j|i}^{(t+1)}$ with \eqref{eq.zji_up}
            \EndFor
      \EndFor
      \EndFor
  \end{algorithmic}
\end{algorithm}

We note that the $\pm$ sign in the initialization of Algorithm~\ref{alg:max_primal} is chosen consistently with the signed edge structure (through $a_{ij}$); the role of $(\mu_z,\sigma_z)$ and how to choose them for privacy-optimal behavior are discussed in Section~\ref{ss.max_sys}.

\subsubsection{Performance Analysis}\label{ss.max_prime_privacy}
We use the privacy metric and adversary model defined in Section~\ref{ss.eval}. In particular, we consider passive corruption of $\mathcal{V}_c$ together with global eavesdropping, and (in this section) plaintext communication. This implies that initialization messages carrying $\{z_{i|j}^{(0)}\}$ are also observable; under this assumption, broadcast and unicast transmissions lead to the same adversarial observation.

For corrupt nodes $j \in \mathcal{V}_c$, the observation over all iterations consist of:
\[
\big\{S_j,\, X_j^{(t)},\, Z_{j \mid k}^{(t)},\, Z_{k \mid j}^{(t)}\big\}_{j \in \mathcal{V}_c,\, k \in \mathcal{N}_j,\, t \in \mathcal{T}},
\]
where $\mathcal{T} =\{0, \dots, t_{\mathrm{max}}\}$ denotes the complete set of iteration indices and $t_{\mathrm{max}}$ represents the final iteration. 
Eavesdropping adversaries intercept the channel-communicated data and, therefore, have knowledge of
\[
\big\{Z_{j \mid k}^{(0)},\, X_j^{(t)}\big\}_{j \in \mathcal{V},\, (j,k) \in \mathcal{E},\, t \in \mathcal{T}}.
\]

After eliminating informational redundancies accessible to both adversarial models, the individual privacy metric reduces to
\begin{align}\label{eq.adv_info_max}
    \nonumber I(S_i;\mathcal{O})
    \overset{}{=}&I(S_i;\{S_j,Z_{j \mid k}^{(t)},\, Z_{k \mid j}^{(t)}\}_{j\in\mathcal{V}_c, k \in \mathcal{N}_j, t \in \mathcal{T}},\\
    &\{X_{j}^{(t)}\}_{j\in \mathcal{V},t\in\mathcal{T}},\{Z_{j\mid k}^{(0)}\}_{(j,k)\in\mathcal{E}}).
\end{align}

The next theorem identifies a trajectory-dependent condition that determines whether the max operator in \eqref{eq.xi_up} ever reveals $s_i$ directly.

\begin{theorem}\label{thm.max_prime}
Let ${\mathcal V}_{p}$ be the set of nodes satisfying
    \begin{equation}\label{eq.condition_P}
    \frac{ -1- \sum_{  j \in {\mathcal N}_i}  a_{  ij}z_{  i|j}^{(t)}}{cd_i}>s_i
  \end{equation}
for all $t\in \mathcal T$, and let $Y_i = \mathbb{I}_{\{i\in\mathcal{V}_p\}}(S_i)$. Then
\begin{align}\label{eq.theo11}
    \nonumber I(S_i;\mathcal{O})=&I(S_i;Y_i)+P(Y_i=0)I(S_i;\mathcal{O}|Y_i=0)\\
    &+P(Y_i=1)I(S_i;\mathcal{O}|Y_i=1)
\end{align}
where
\begin{align}
    \nonumber I(S_i;\mathcal{O}|Y_i=0)=&I(S_i;S_i\mid Y_i=0) \\
    \nonumber I(S_i;\mathcal{O}|Y_i=1)=&I(S_i;\{S_j\}_{j\in\mathcal{V}_c}\cup\{S_j\}_{j\in \mathcal{V}\backslash \mathcal{V}_p},\\
    &\{Y_j\}_{j\in \mathcal{V}}\mid Y_i=1)\label{eq.theo13}
\end{align}
  \label{theo:I}
    \end{theorem}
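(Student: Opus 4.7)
The plan is to decompose $I(S_i;\mathcal{O})$ around the binary variable $Y_i$ via the chain rule, and then evaluate the two resulting conditional mutual informations case by case. The first step is to establish that $Y_i$ is (almost surely) a deterministic function of the adversarial observation $\mathcal{O}$. From the eavesdropped initializations $\{z_{i|j}^{(0)}\}$ together with the broadcast sequence $\{x_j^{(t)}\}$, the adversary can recursively reconstruct every $z_{i|j}^{(t)}$ using \eqref{eq.zji_up}, and hence evaluate the left argument $L_i^{(t)}$ of the max in \eqref{eq.xi_up} at each iteration. Comparing $L_i^{(t)}$ with the observed $x_i^{(t)}$ tells the adversary whether the max ever selected $s_i$: if $x_i^{(t)} > L_i^{(t)}$ at any $t$ then $Y_i = 0$, otherwise $Y_i = 1$ (ties occur with probability zero under the continuous Gaussian initialization). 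Hence $H(Y_i\mid \mathcal{O})=0$, so $I(S_i;\mathcal{O},Y_i)=I(S_i;\mathcal{O})$, and the chain rule gives $I(S_i;\mathcal{O}) = I(S_i;Y_i) + I(S_i;\mathcal{O}\mid Y_i)$, which splits as in \eqref{eq.theo11} upon conditioning on each value of $Y_i$.

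Next, for the $Y_i = 0$ branch: if $Y_i = 0$, then at some iteration $t^\star$ we have $L_i^{(t^\star)} < s_i$, so \eqref{eq.xi_up} returns exactly $s_i$ and the adversary reads off $x_i^{(t^\star)} = s_i$. Thus $S_i$ is a deterministic function of $\mathcal{O}$ given $Y_i=0$, and $I(S_i;\mathcal{O}\mid Y_i=0) = h(S_i\mid Y_i=0) = I(S_i;S_i\mid Y_i=0)$, which is the first equation of \eqref{eq.theo13}.

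The $Y_i=1$ branch is the main step. Conditional on $Y_i=1$, node $i$'s trajectory $\{x_i^{(t)}\}$ collapses to the deterministic sequence $\{L_i^{(t)}\}$, so the observations involving node $i$ depend on $S_i$ only through the event $Y_i=1$ itself. Applying the same reasoning to every other node, the trajectories of nodes in $\mathcal{V}_p$ are data-independent given $\{Y_j\}_{j\in\mathcal{V}}$ and the initializations, while for $j\in\mathcal{V}\backslash\mathcal{V}_p$ the argument used in the $Y_i=0$ case shows that $S_j$ is recoverable from $\mathcal{O}$. Combining these with the corrupted inputs $\{S_j\}_{j\in\mathcal{V}_c}$ and invoking the independence of the random initializations from the private data, one obtains the Markov chain
\[
S_i \ \to\ \bigl(\{S_j\}_{j\in\mathcal{V}_c},\,\{S_j\}_{j\in\mathcal{V}\backslash\mathcal{V}_p},\,\{Y_j\}_{j\in\mathcal{V}}\bigr)\ \to\ \mathcal{O}
\]
conditional on $Y_i=1$. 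The data-processing inequality, together with the fact that the middle variable is itself measurable with respect to $\mathcal{O}$ and the corruption side-information, turns this into the desired equality \eqref{eq.theo13}.

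The hard part is making this last reduction rigorous. Conditioning on $Y_i=1$ subtly couples $S_i$ with the $L_i^{(t)}$ trajectory through the inequalities $S_i\le L_i^{(t)}$, so one must verify that once $\{Y_j\}_{j\in\mathcal{V}}$ is fixed, no further information about $S_i$ leaks through these inequality constraints beyond what the indicator event already encodes. This rests on two ingredients: the independence of the random initializations $\{z_{i|j}^{(0)}\}$ from all private data, and the observation that, with the indicator variables $\{Y_j\}$ fixed, the $L_i^{(t)}$-trajectories are measurable with respect to the inputs outside $\mathcal{V}_p$ and the initializations alone, so $S_i$ only interacts with the transcript through the event $\{i\in\mathcal{V}_p\}$.
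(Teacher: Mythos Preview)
Your proposal is correct and follows essentially the same approach as the paper: determinability of $Y_i$ from $\mathcal{O}$, chain rule/conditioning, direct revelation of $s_i$ when $Y_i=0$, and reduction of the transcript to $(\{S_j\}_{j\in\mathcal{V}_c},\{S_j\}_{j\in\mathcal{V}\setminus\mathcal{V}_p},\{Y_j\}_{j\in\mathcal{V}})$ when $Y_i=1$. The only cosmetic differences are that the paper shows determinability of $Y_i$ via a second-order recursion on $x_j^{(t)}$ (equations \eqref{eq.x_diff}--\eqref{eq.z_diff}) rather than reconstructing $L_i^{(t)}$ directly, and carries out the $Y_i=1$ reduction by direct substitution rather than phrasing it as a Markov chain plus data-processing inequality.
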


\begin{proof}
    See Appendix~\ref{app:max_prime}.
\end{proof}

Theorem~\ref{thm.max_prime} separates two regimes. If $Y_i=0$, then \eqref{eq.condition_P} fails at some iteration and the max operator in \eqref{eq.xi_up} enforces $x_i^{(t)}=s_i$, which yields potentially large leakage. If $Y_i=1$, on the other hand,  the max operator never selects $s_i$ during the run, and the leakage is limited to what is implied by the remaining observations.

\subsubsection{Optimality analysis}\label{ss.max_sys}
We now connect the above characterization to the ideal-world benchmark in Section~\ref{sec.ideal}. Under global eavesdropping, revealing the identity of the maximizer is unavoidable, but additional disclosure (e.g., revealing non-maximizer values via $x_i^{(t)}=s_i$) is not. We therefore aim to enforce \eqref{eq.condition_P} for all non-maximizer nodes so that only the ideal-world leakage remains.

To ensure $Y_i=1$, we require the consensus term to start and remain larger than $s_i$. This relies on the initialization
$$\frac{-1 - \sum_{j \in \mathcal{N}_i} a_{ij} z_{i \mid j}^{(0)}}{c d_i} > s_i.$$

By choosing $\mu_z \gg s_{\max}$ and $\mu_z \gg \sigma_z$, and initializing $z_{i|j}^{(0)} \approx \pm \mu_z$ based on $\text{sign}(a_{ij})$, we ensure the numerator is large and positive. Consequently, in early iterations, the max operator is inactive ($x_i^{(t)} > s_i$), and the update simplifies to
$$x_i^{(t)} = \frac{-1 - \sum_{j \in \mathcal{N}_i} a_{ij} z_{i \mid j}^{(t)}}{c d_i }.$$

This linear update dynamic corresponds to standard PDMM for average consensus. Because the objective is unbounded below, $x_i^{(t)}$ decreases
at a rate governed by $c$ until it reaches the vicinity of the global maximum. For any node $i$ that does not hold the maximum, the limit point is strictly greater than $s_i$, so \eqref{eq.condition_P} can hold for all $t$ and $s_i$ is never directly revealed. In contrast, for a node attaining $s_{\max}$, the iterate necessarily hits $s_i$ to produce the correct output, which is consistent with the unavoidable leakage identified in Section~\ref{sec.ideal}.

\begin{lemma}[Achievability of the Ideal-World Privacy Bound]\label{lm.1}
The proposed algorithm can attach the ideal-world privacy bound 
\begin{align*}
    I(S_i;\mathcal{O}) = I\!\left(S_i;\ S_{\max},\ \left\{S_j\right\}_{j\in\mathcal{V}_c},\ \left\{\mathbb{I}_{\{S_j = S_{\max}\}}(S_j)\right\}_{j\in \mathcal{V}}\right),
\end{align*}
provided that:
\begin{enumerate}
    \item $c \gg 1$ (a sufficiently small step-size),
    \item all nodes $j$ with $s_j \neq \max_{k \in \mathcal{V}} s_k$ belong to $\mathcal{V}_p$, i.e., $P(Y_j = 1) = 1$ for every such node;
\end{enumerate}
\end{lemma}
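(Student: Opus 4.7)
The plan is to specialize Theorem~\ref{thm.max_prime} under the two stated conditions and then identify the resulting mutual-information expression with the ideal-world benchmark of Proposition~\ref{prop:max_leak}. I would treat non-maximizer and maximizer honest nodes separately, since the decomposition in \eqref{eq.theo11} behaves differently in each case.

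First, consider an honest non-maximizer node $i\in\mathcal{V}_h$. Condition~2 guarantees $Y_i=1$ almost surely, so $I(S_i;Y_i)=0$, $P(Y_i=0)=0$, and \eqref{eq.theo11} collapses to $I(S_i;\mathcal{O})=I(S_i;\mathcal{O}\mid Y_i=1)$. Substituting \eqref{eq.theo13} and using that conditioning on a probability-one event is vacuous, this reduces to
\begin{equation*}
I(S_i;\mathcal{O})=I\bigl(S_i;\ \{S_j\}_{j\in\mathcal{V}_c}\cup\{S_j\}_{j\in\mathcal{V}\setminus\mathcal{V}_p},\ \{Y_j\}_{j\in\mathcal{V}}\bigr).
\end{equation*}
Under condition~2, $\mathcal{V}\setminus\mathcal{V}_p$ consists exactly of the maximizer nodes, so $Y_j=1-\mathbb{I}_{\{S_j=S_{\max}\}}(S_j)$ for every $j\in\mathcal{V}$, and $\{Y_j\}_{j\in\mathcal{V}}$ is a measurable bijection of $\{\mathbb{I}_{\{S_j=S_{\max}\}}(S_j)\}_{j\in\mathcal{V}}$. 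Moreover, every element of $\{S_j\}_{j\in\mathcal{V}\setminus\mathcal{V}_p}$ equals $S_{\max}$, so the pair $\bigl(\{S_j\}_{j\in\mathcal{V}\setminus\mathcal{V}_p},\{Y_j\}_{j\in\mathcal{V}}\bigr)$ determines $(S_{\max},\{\mathbb{I}_{\{S_j=S_{\max}\}}(S_j)\}_{j\in\mathcal{V}})$ and vice versa. The two conditioning sets carry identical information, so the mutual informations agree and yield the ideal-world expression.

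For an honest maximizer node, $Y_i=0$ deterministically, so \eqref{eq.theo11} reduces to $I(S_i;\mathcal{O})=I(S_i;S_i\mid Y_i=0)=h(S_i)$. The ideal-world right-hand side likewise equals $h(S_i)$ because $S_{\max}=S_i$ lies in its conditioning set, so equality holds trivially in this case as well.

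Finally, I would verify that condition~1 is sufficient to realize condition~2 through the initialization scheme of Algorithm~\ref{alg:max_primal}. With $\mu_z\gg s_{\max}$ and $\mu_z\gg\sigma_z$, the numerator of \eqref{eq.xi_up} is initially large and positive, so the max operator is inactive and the iteration reduces to a linear PDMM update on the ramp objective $\sum_i x_i$, driving the iterates monotonically downward at a rate governed by $1/c$. For $c$ sufficiently large, each non-maximizer trajectory $x_i^{(t)}$ descends slowly enough to remain strictly above $s_i$ at every iteration before settling near $s_{\max}$, ensuring that \eqref{eq.condition_P} holds throughout. The main obstacle I anticipate is precisely this quantitative step-size argument: ruling out any transient overshoot below $s_i$ requires a careful monotonicity or Lipschitz analysis of the linear PDMM dynamics rather than an asymptotic limit-point argument, and may depend on spectral properties of the graph that control the per-iteration contraction.
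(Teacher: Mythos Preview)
Your core argument---collapsing \eqref{eq.theo11} under $P(Y_i=1)=1$, substituting \eqref{eq.theo13}, and then identifying $\{S_j\}_{j\in\mathcal{V}\setminus\mathcal{V}_p}$ with $S_{\max}$ and $\{Y_j\}$ with $\{\mathbb{I}_{\{S_j=S_{\max}\}}\}$---is exactly what the paper does. Two points, however, about how the hypotheses are used. First, condition~2 only asserts that every non-maximizer lies in $\mathcal{V}_p$; it does \emph{not} by itself give you $\mathcal{V}\setminus\mathcal{V}_p=\{\text{maximizers}\}$, since nothing in condition~2 rules out a maximizer also landing in $\mathcal{V}_p$. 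The paper invokes condition~1 precisely at this step (their step~(f)): with $c\gg 1$ and suitable initialization, the trajectory's lowest point approaches $s_{\max}$, which forces the maximizer's inequality \eqref{eq.condition_P} to fail and hence $Y_j=0$ for maximizers. So condition~1 is not merely a sufficient condition for condition~2 as your final paragraph suggests; it is the separate ingredient that pins down the maximizer side of the bijection $Y_j\leftrightarrow 1-\mathbb{I}_{\{S_j=S_{\max}\}}$. Second, your last paragraph---deriving condition~2 from condition~1 via a monotonicity/Lipschitz analysis---is not part of the lemma's proof at all: both conditions are stated as hypotheses, so the quantitative step-size analysis you flag as an obstacle is simply not required here. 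Your explicit treatment of the honest-maximizer case is a nice addition the paper omits.
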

\begin{proof}
See Appendix~\ref{app:lm}.
\end{proof}

\begin{remark}
Condition~(1) guarantees that the update steps of $x_i^{(t)}$ are sufficiently small so that trajectories approach $s_{\max}$ without large oscillation. In practice, increasing $c$ must be accompanied by a compatible choice of the initialization (sufficiently large $\mu_z$ compared to $c$) to satisfy Condition~(2).
\end{remark}

\subsection{Minimum consensus}
Minimum consensus is the counterpart of maximum consensus. Since the privacy mechanism and analysis follow the same template, we only highlight the difference in the local constraint. The minimum consensus problem can be formulated as
\begin{equation*}
\begin{array}{ll} \underset{\{x_i:{i \in \mathcal{V}}\}}{\text{minimize}}  & {\displaystyle \sum_{i\in {\mathcal V}} (-x_i+{\cal I}_{\{x_i\leq s_i\}}(x_i)),} \\\rule[4mm]{0mm}{0mm}
\text{subject to} & x_i -x_j = 0, \quad (i,j)\in \mathcal  E.
\end{array}
\end{equation*}
To implement minimum consensus, we only need to modify the update rule in \eqref{eq.xi_up} as 
\begin{equation*}x_{  i}^{(t)} =\min\left(\frac{ 1- \sum_{  j \in {\mathcal N}_i}  a_{ij}z_{  i|j}^{(t)}}{cd_i},s_i\right).\end{equation*}
The corresponding privacy intuition is symmetric: one aims to control the trajectory so that non-minimizer nodes do not have iterates that directly equal their private values.

\subsection{Top-$K$ aggregation}
Top-$K$ aggregation generalizes maximum consensus: the goal is to identify the $K$ largest values in the network. A natural approach is iterative: first identify the maximum, then modify the constraint (indicator term) to exclude already-identified maxima, and repeat.

Define the maximum value as $s_{\max} = s_{(1)} = \max_{i\in\mathcal{V}} s_i$, and let $s_{(k)}$ denote the $k$-th sorted value in the set $\{s_i : i \in \mathcal{V}\}$.
In top-$K$ aggregation, after identifying the maximum value, the indicator function for the subsequent optimization round must be adjusted. For instance, when searching for the second largest value, the indicator function is redefined as $\mathcal{I}_{\{x_i < s_i\} \cup \{s_i < s_{\max}\}}(x_i)$.

Accordingly, the algorithm proceeds in $K$ major rounds to sequentially identify the $K$ largest values. Let $\mathcal{S}$ denote the pool of values identified in the previous rounds (i.e., the current top-$k$ values where $k < K$). During the $x$-update step of the $k$-th round, nodes holding the private value already in $\mathcal{S}$ bypass the maximum operator to allow their states to evolve freely. However, due to the limited number of iterations $t_{\max}$ in each round, the consensus values may not perfectly match  $s_{(k)}$. We, therefore, introduce a tolerance $\epsilon > 0$ and define the set of identified values 
as $\mathcal{S}_{\epsilon}=\{\, s + \delta \mid s \in \mathcal{S},\ \delta \in [-\epsilon, \epsilon] \,\}$. The update rule in \eqref{eq.xi_up} is then modified as:
\begin{align}\label{eq.topk_xi}
     x_{i}^{(t)} = \begin{cases} 
     \frac{-1 - \sum_{j \in \mathcal{N}_i} a_{ij} z_{i|j}^{(t)}}{c d_i}, & \text{if } s_i\in\mathcal{S}_{\epsilon} \\
     \max\left(\frac{-1 - \sum_{j \in \mathcal{N}_i} a_{ij} z_{i|j}^{(t)}}{c d_i}, s_i\right), & \text{otherwise}
    \end{cases}
\end{align}
where $s_i\in\mathcal{S}_{\epsilon}$ indicates that node $i$ holds one of the values already identified in the pool $\mathcal{S}$ within a small tolerance. After $t_{\max}$ iterations of \eqref{eq.topk_xi}, the next maximum value is identified, added to $\mathcal{S}$, and the process repeats for $K$ rounds. The full procedure is summarized in Algorithm~\ref{alg:topk}.

From the ideal-world viewpoint, top-$K$ inevitably reveals membership/identity information about the top-$K$ set. The best-case privacy behavior is therefore analogous to maximum consensus: additional leakage beyond the top-$K$ values and their associated membership indicators can be avoided when the iterates of non-top-$K$ nodes do not directly hit their private values.

\begin{algorithm}[ht]
  \caption{Privacy preserving top-$K$ aggregation}
  \label{alg:topk}
  \begin{algorithmic}
     \ForAll{ $i \in \mathcal{V}, j \in \mathcal{N}_i$,}
     \State Randomly initialize $z_{i\mid j}^{(0)}\sim\mathcal{N}(\pm\mu_z,\sigma_z^2)$ \Comment{Initialization}
     \State $\text{Node}_j \leftarrow \text{Node}_i(z_{  i|j}^{(0)})$
     \State Initialize $\mathcal{S}$ as an empty set
     \EndFor
     \State Run the maximum consensus algorithm for $t_{\max}$ iterations to obtain an approximate solution $x^{(t_{\max})}=\hat{s}_{\max}$
     \State Add $\hat{s}_{\max}$ to set $\mathcal{S}$
     \For{$k=1,2,\cdots,K$} 
     \State  Initialize $z_{i\mid j}^{(0)}:=z_{i\mid j}^{(t_{\max})}$ using the $z$ values from the previous round
          \For{$t=0,1,\cdots, t_{\max}-1$} 
            \ForAll{$i \in \mathcal{V}$ } 
\State Update $x_{  i}^{(t)}$ with \eqref{eq.topk_xi}
\State Update $\forall j\in \mathcal{N}_i: z_{  j|i}^{(t+1)}$ with \eqref{eq.zji_up}
                \State $\text{Node}_{j\in \mathcal{N}_i} \leftarrow \text{Node}_i(x_i^{(t)})$\Comment{Broadcast}
                \ForAll{ $j \in \mathcal{N}_i$} 
        \State Update $z_{  j|i}^{(t+1)}$ with \eqref{eq.zji_up}
            \EndFor
      \EndFor
      \EndFor
      \State Add $x^{(t_{\max})}=\hat{s}_{(k)}$ to set $\mathcal{S}$
      \EndFor
  \end{algorithmic}
\end{algorithm}

\section{Robust statistical aggregation}\label{sec.robust}
This section studies robust statistical aggregators, focusing on the coordinate-wise median/quantiles and trimmed mean. These methods are widely used to mitigate outliers and adversarial perturbation~\cite{wang2025federated,cambus2025coordinate}. Compared to extrema, their privacy behavior is more nuanced: the ideal-world discussion in Section~\ref{sec.ideal} indicates that under eavesdropping, median-type aggregation unavoidably reveals not only the identity of the median holder but also side information about whether each node lies above or below the median. The purpose of this section is to quantify this leakage for our protocol and to clarify under what conditions the algorithm can approach the ideal-world benchmark.

\subsection{Median consensus\label{sec.median}}
We consider distributed median consensus. Median aggregation is attractive because it is robust to outliers: unlike averaging, it is not dominated by a small number of extreme values. This robustness, however, comes with different privacy tradeoffs than maximum/minimum aggregation because nodes may approach the consensus value from both sides.
\subsubsection{Proposed approach}
The $x$-update can be written as the piecewise form in \eqref{eq.x_update_eq}; the full procedure is summarized in Algorithm~\ref{alg:med_primal}.
\begin{align}\label{eq.x_update_eq}
     x_{  i}^{(t)}=\left\{ \begin{array}{ll}
         \frac{ -1- \sum_{  j \in {\mathcal N}_i} a_{  ij} z_{  i|j}^{(t)}}{cd_i},  &\text{if   } \frac{ -1- \sum_{  j \in {\mathcal N}_i} a_{  ij} z_{  i|j}^{(t)}}{cd_i}> s_i,\\
         \frac{ 1- \sum_{  j \in {\mathcal N}_i} a_{  ij} z_{  i|j}^{(t)}}{cd_i}, &\text{if   } \frac{ 1- \sum_{  j \in {\mathcal N}_i} a_{  ij} z_{  i|j}^{(t)}}{cd_i}< s_i,\\
         s_i, &\text{otherwise.} \rule[4mm]{0mm}{0mm}\\
    \end{array}\right.
\end{align}

\begin{algorithm}[ht]
  \caption{Privacy-preserving median consensus}
  \label{alg:med_primal}
  \begin{algorithmic}
     \ForAll{ $i \in \mathcal{V}, j \in \mathcal{N}_i$,}
     \State Randomly initialize $z_{i\mid j}^{(0)}$ \Comment{Initialization}
     \State $\text{Node}_j \leftarrow \text{Node}_i(z_{  i|j}^{(0)})$
     \EndFor
          \For{$t=0,1,\cdots,t_{\max}-1$} 
            \ForAll{$i \in \mathcal{V}$ } 
\State Update $x_{  i}^{(t)}$ with \eqref{eq.x_update_eq}
\State Update $\forall j\in \mathcal{N}_i: z_{  j|i}^{(t+1)}$ with \eqref{eq.zji_up}
                \State $\text{Node}_{j\in \mathcal{N}_i} \leftarrow \text{Node}_i(x_i^{(t)})$\Comment{Broadcast}
                \ForAll{ $j \in \mathcal{N}_i$} 
        \State Update $ z_{  j|i}^{(t+1)}$ with \eqref{eq.zji_up}
            \EndFor
      \EndFor
      \EndFor
  \end{algorithmic}
\end{algorithm}

\subsubsection{Performance Analysis}
We use the same observation model $\mathcal{O}$ as in \eqref{eq.adv_info_max} (passive corruption plus global eavesdropping under plaintext communication). In contrast to maximum consensus, the piecewise structure in \eqref{eq.x_update_eq} introduces \emph{two} inequality regimes, corresponding to approaching the median from above or from below. Building on the refined argument in~\cite{yu2025optimal}, the following theorem decomposes the leakage accordingly.

\begin{theorem}\label{thm.med_l1}
Let ${\mathcal V}_{p1}$ denote the set of nodes $i\in \mathcal V _{p1}$ that satisfy
    \begin{equation*}
    \frac{ -1- \sum_{  j \in {\mathcal N}_i}  a_{  ij}z_{  i|j}^{(t)}}{cd_i}>s_i
  \end{equation*}
  for all $t\in \mathcal T$.
Similarly, let ${\mathcal V}_{p2}$ denote the set of nodes satisfying
    \begin{equation*}
    \frac{ 1- \sum_{  j \in {\mathcal N}_i}  a_{  ij}z_{  i|j}^{(t)}}{cd_i}<s_i.
  \end{equation*} 

 Let $Y_i = \mathbb{I}_{\{i\in\mathcal{V}_{p1}\}}(S_i)$ and $Y^\prime_i = \mathbb{I}_{\{i\in\mathcal{V}_{p2}\}}(S_i)$, then we have 
\begin{align*}
    I(S_i;\mathcal{O})=&I(S_i;Y_i,Y_i^\prime)\\
    &+P(Y_i=0,Y_i^\prime=0)I(S_i;\mathcal{O}|Y_i=0,Y_i^\prime=0)\\
    &+P(Y_i=1,Y_i^\prime=0)I(S_i;\mathcal{O}|Y_i=1,Y_i^\prime=0)\\
    &+P(Y_i=0,Y_i^\prime=1)I(S_i;\mathcal{O}|Y_i=0,Y_i^\prime=1)
\end{align*}
where
\begin{align*}
    I(S_i;\mathcal{O}|Y_i=0,Y_i^\prime=0)\leq&I(S_i;S_i\mid Y_i=0,Y_i^\prime=0)\\
    I(S_i;\mathcal{O}|Y_i=1,Y_i^\prime=0)\leq&I(S_i;\{S_j\}_{j\in\mathcal{V}_c}\cup\{S_j\}_{j\in \mathcal{V}\backslash \{\mathcal{V}_{p1}\cup\mathcal{V}_{p2}\}},\\
    &\{Y_j,Y_j^\prime\}_{j\in \mathcal{V}}\mid Y_i=1,Y_i^\prime=0)\\
    I(S_i;\mathcal{O}|Y_i=0,Y_i^\prime=1)\leq&I(S_i;\{S_j\}_{j\in\mathcal{V}_c}\cup\{S_j\}_{j\in \mathcal{V}\backslash \{\mathcal{V}_{p1}\cup\mathcal{V}_{p2}\}},\\
    &\{Y_j,Y_j^\prime\}_{j\in \mathcal{V}}\mid Y_i=0,Y_i^\prime=1)
\end{align*}
  \end{theorem}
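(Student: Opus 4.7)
The plan is to mirror the argument of Theorem~\ref{thm.max_prime}, but now carry two indicator variables $(Y_i, Y_i^\prime)$ because the piecewise update \eqref{eq.x_update_eq} has two ``inactive'' branches rather than one. First, I would apply the chain rule of mutual information
\begin{align*}
I(S_i; \mathcal{O}) = I(S_i; Y_i, Y_i^\prime) + I(S_i; \mathcal{O} \mid Y_i, Y_i^\prime),
\end{align*}
and expand the conditional term as a weighted sum over the joint realizations of $(Y_i, Y_i^\prime)$. A short sanity check eliminates the case $(Y_i=1, Y_i^\prime=1)$: since $\frac{-1-\sum a_{ij} z_{i|j}^{(t)}}{cd_i} < \frac{1-\sum a_{ij} z_{i|j}^{(t)}}{cd_i}$, the two outer conditions cannot both hold at any $t$, so $P(Y_i=1, Y_i^\prime=1)=0$ and only the three regimes listed in the theorem statement remain.

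Second, I would treat each regime separately. For $(Y_i=0, Y_i^\prime=0)$, the third branch of \eqref{eq.x_update_eq} is triggered at some iteration, meaning $x_i^{(t)} = s_i$ is broadcast to all neighbors (and hence to the eavesdropper); thus $\mathcal{O}$ already contains $S_i$, and the generic bound $I(S_i; \mathcal{O} \mid \cdot) \leq I(S_i; S_i \mid \cdot)$ applies. For $(Y_i=1, Y_i^\prime=0)$, and symmetrically for $(Y_i=0, Y_i^\prime=1)$, node $i$'s $x$-update consistently selects the linear branch, which is a function of dual variables only and does not take $s_i$ as an input. Arguing inductively over $t$ and over the graph, each observable $x_j^{(t)}$ and $z_{j|k}^{(0)}$ can be expressed as a deterministic function of: (i) the initial duals, which are observed and independent of $S_i$; (ii) the values $\{s_j\}_{j \in \mathcal V \backslash (\mathcal V_{p1} \cup \mathcal V_{p2})}$ that actually enter the recursion through the third branch; (iii) the per-node membership indicators $\{Y_j, Y_j^\prime\}_{j \in \mathcal V}$ that select which branch each node follows; and (iv) the directly revealed $\{s_j\}_{j \in \mathcal V_c}$. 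Applying the data-processing inequality to this sufficient statistic yields the stated upper bounds.

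The main obstacle is justifying the sufficiency claim in the second case with care. Concretely, one must show by joint induction that, along any trajectory consistent with the conditioning event, $S_i$ never re-enters the updates of neighbors or further-away nodes via some indirect route. The delicate point is that even though a neighbor $j \notin \mathcal V_{p1} \cup \mathcal V_{p2}$ may at some iteration set $x_j^{(t)} = s_j$, this dependence is already absorbed by explicitly including $s_j$ in the sufficient statistic, while the indicators $\{Y_j, Y_j^\prime\}$ encode exactly at which iterations such substitutions occur. Once this is established, conditional independence of $S_i$ from the remaining randomness, combined with data processing, closes the argument. The remaining bookkeeping parallels the maximum-consensus proof in Appendix~\ref{app:max_prime}, with the caveat that the two-sided nature of the median update forces all bounds to be written as inequalities rather than equalities.
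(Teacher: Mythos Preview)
Your plan matches the paper's treatment: it does not give a standalone proof of Theorem~\ref{thm.med_l1} but simply says the decomposition is ``analogous to Theorem~\ref{theo:I}'' (pointing to \cite{yu2025optimal}), so replaying Appendix~\ref{app:max_prime} with the pair $(Y_i,Y_i')$ is exactly what is intended. Two small corrections are worth making.

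First, the chain-rule step $I(S_i;\mathcal{O})=I(S_i;Y_i,Y_i')+I(S_i;\mathcal{O}\mid Y_i,Y_i')$ tacitly uses that $(Y_i,Y_i')$ is a deterministic function of $\mathcal{O}$; this holds (from the observed $x$'s and initial $z$'s one can reconstruct each $\sum_k a_{jk}z_{j|k}^{(t)}$ and read off which branch of \eqref{eq.x_update_eq} was taken at every $t$), but it should be stated, as in step~(a) of Appendix~\ref{app:max_prime}.

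Second, your claim that $(Y_i,Y_i')=(0,0)$ forces the third branch $x_i^{(t)}=s_i$ to fire is not correct: $Y_i=0$ only says the first inequality fails at \emph{some} $t_1$ and $Y_i'=0$ that the second fails at \emph{some} $t_2$, so a trajectory can oscillate between the two linear branches without ever entering the middle window. The paper notes this right after the theorem (``it \emph{may} happen that $x_i^{(t)}=s_i$\ldots Otherwise, $s_i$ acts only through the switching behavior''). This does not break the argument, since $I(S_i;\mathcal{O}\mid\cdot)\le I(S_i;S_i\mid\cdot)$ is a trivial entropy bound requiring no hypothesis, but it is the actual reason the median theorem carries inequalities where the max theorem had equalities: for $j\notin\mathcal{V}_{p1}\cup\mathcal{V}_{p2}$ the adversary may only pin $s_j$ to an interval rather than its exact value, so your sufficient statistic can strictly dominate $\mathcal{O}$ and data processing yields only $\le$. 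Your closing remark about the ``two-sided nature'' is the right intuition; just do not overclaim in the $(0,0)$ case.
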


Since $\mathcal{V}_{p1}$ and $\mathcal{V}_{p2}$ are disjoint, $P(Y_i=1, Y_i^\prime=1)=0$, which yields a decomposition analogous to Theorem~\ref{theo:I}. The case $(Y_i,Y_i^\prime)=(0,0)$ corresponds to trajectories that enter the ``median window'' in \eqref{eq.x_update_eq}; in that case, it may happen that $x_i^{(t)}=s_i$ for some $t$, which leads to complete disclosure of $s_i$. Otherwise, $s_i$ acts only through the switching behavior of \eqref{eq.x_update_eq} and the observation constrains $s_i$ to a range determined by the crossing event. In our experiments (see Section~\ref{sec.experiment}), the direct-hit behavior is the dominant mode, and the inequalities above are therefore often tight in practice.

The theorem also provides an interpretation consistent with the ideal-world view in Section~\ref{sec.ideal}: nodes that avoid direct disclosure fall into two trajectory classes. High-value nodes in $\mathcal{V}_{p1}$ approach the consensus value from above, while low-value nodes in $\mathcal{V}_{p2}$ approach from below. This mirrors the unavoidable ``above/below the median'' side information present in the ideal world under eavesdropping.

\subsubsection{Optimality analysis}\label{ss.med_sys}
Compared to maximum consensus, attaining the ideal-world bound for median consensus is inherently more challenging, since private values lie on both sides of the consensus value and the iteration may approach the limit from either direction. The key to protecting a private value $s_i$ is to ensure that it never enters the interval 
$[\frac{ -1- \sum_{  j \in {\mathcal N}_i} a_{  ij} z_{  i|j}^{(t)}}{cd_i},\frac{ 1- \sum_{  j \in {\mathcal N}_i} a_{  ij} z_{  i|j}^{(t)}}{cd_i}]$, whose width is $\tfrac{2}{c d_i}$; see Fig.~\ref{fig:medwindow} for an illustrative example.
Doing so typically requires some prior knowledge about whether a node lies below or above the median, so that one can initialize the trajectory to approach the limit from the correct side and avoid crossing the sensitive window.

For example, if it is known that $s_i$ is below the median, one may initialize $x_i^{(0)}$ sufficiently above the median so the trajectory approaches from above; similarly for values above the median. Under such a setup, when $c \gg 1$ and the convergence window shrinks, the indicators $Y_i$ and $Y_i^\prime$ become closely aligned with the ideal-world side information
$\mathbb{I}_{\{S_i \le S_{\med}\}}(S_i)$ and
$\mathbb{I}_{\{S_i \ge S_{\med}\}})(S_i)$.
In this idealized regime, the protocol can behave close to the ideal world, protecting all nodes except the one attaining the median. Without such prior information, however, one cannot in general steer all trajectories to avoid the median window simultaneously, and the achievable leakage typically exceeds the ideal-world benchmark.

\begin{figure}
    \centering
    \includegraphics[width=0.99\linewidth]{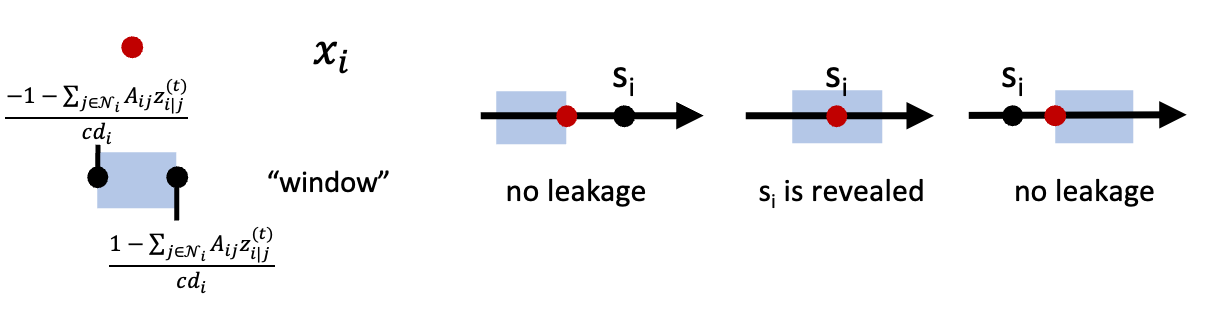}
    \caption{A schematic illustration of the leakage of $s_i$ during the $x$-update of the median consensus protocol.}
    \label{fig:medwindow}
\end{figure}

\subsection{Quantile consensus}
Quantile consensus is a natural generalization of median consensus. In a network of nodes, each node holds a value, and the goal is for the network to reach agreement on a specific quantile of all the node values. This allows for flexible aggregation beyond the minimum, maximum, or median, making it useful in applications where a specific percentile or threshold is of interest.

We define the quantile loss as
\begin{align*}
\rho_q(r) =
\begin{cases} 
q \cdot r, & \text{if } r \ge 0, \\ 
(q-1) \cdot r, & \text{if } r < 0,
\end{cases}
\end{align*}
where $q \in (0,1)$ denotes the desired quantile.

Then, the quantile consensus problem can be formulated as
\begin{equation*}
\begin{array}{ll} 
\underset{\{x_i:{i \in \mathcal{V}}\}}{\text{minimize}} & {\displaystyle \sum_{i\in \mathcal{V}} \rho_q(x_i - s_i),} \\\rule[4mm]{0mm}{0mm}
\text{subject to} & x_i - x_j = 0, \quad (i,j) \in \mathcal{E}.
\end{array}
\end{equation*}

To implement quantile consensus, we modify the update rule in \eqref{eq.x_update_eq} as
\begin{align*}
x_i^{(t)} =
\begin{cases}
\frac{-q - \sum_{j \in \mathcal{N}_i} A_{ij} z_{i|j}^{(t)}}{c d_i}, & \text{if } \frac{-q - \sum_{j \in \mathcal{N}_i} A_{ij} z_{i|j}^{(t)}}{c d_i} > s_i, \\
\frac{1-q - \sum_{j \in \mathcal{N}_i} A_{ij} z_{i|j}^{(t)}}{c d_i}, & \text{if } \frac{1-q - \sum_{j \in \mathcal{N}_i} A_{ij} z_{i|j}^{(t)}}{c d_i} < s_i, \\
s_i, & \text{otherwise.} \rule[4mm]{0mm}{0mm}
\end{cases}
\end{align*}

Similar to median consensus, the key to preserving privacy in quantile consensus is to prevent the private value $s_i$ from falling within the interval
$\left[ \frac{-q - \sum_{j \in \mathcal{N}_i} A_{ij} z_{i|j}^{(t)}}{c d_i}, \frac{1-q - \sum_{j \in \mathcal{N}_i} A_{ij} z_{i|j}^{(t)}}{c d_i} \right].$
By controlling the convergence trajectory relative to this interval, sensitive values can be protected while allowing the network to reach agreement on the desired quantile.

\subsection{Trimmed mean aggregation}

Trimmed mean aggregation is a robust statistical method used to compute an average while mitigating the influence of outliers in many robust federated learning algorithms~\cite{wang2025federated}, thereby providing resilience against Byzantine attacks. It works by dropping the largest and smallest $K$ values (where $K$ is determined by the trim ratio), and then averaging the remaining ones. Formally, the trimmed mean estimate can be expressed as
\[
x^*=\frac{n\overline{s}-\sum_{k=1}^K(s_{(k)}+ s_{(n+1-k)})}{n-2K},
\]
where $\overline{s}$ is the average value of all nodes.
Therefore, we can approximate the trimmed mean by combining the top-$K$ extrema with the global average. Once each node reports its top-$K$ largest and smallest values together with its mean, the trimmed mean estimate can be directly computed using the formula above.

In prior work~\cite{Jane2020TIFS}, it has been shown that the theoretical information leaked by average consensus in the ideal world is given by
\begin{align*}
    I(S_i;\{S_j\}_{j\in\mathcal{V}_c},\{\sum_{j\in\mathcal{V}_{h,m}}S_j\}_{m=1,2,\cdots,M}),
\end{align*}
where $\mathcal{V}_{h,m}$ denotes the $m$-th connected component formed by the honest nodes after removing all corrupted nodes from the graph. Moreover, this information-theoretic bound can be achieved using the privacy-preserving A/PDMM algorithm~\cite{Jane2020TIFS} when $\sigma_z\rightarrow\infty$.

By incorporating the information additionally leaked through the top-$K$ values, we obtain that the achievable and theoretically minimal information leakage of trimmed mean aggregation is
\begin{align*}
    &I(S_i;\{S_j\}_{j\in\mathcal{V}_c},\{\mathbb{I}_{S_j= S_{(n-K+1)}}(S_j),\mathbb{I}_{S_j= S_{(K)}}(S_j)\}_{j\in \mathcal{V}}\\
    &\{\sum_{j\in\mathcal{V}_{h,m}}S_j\}_{m=\{1,2,\cdots,M\}},\{S_{(k)},S_{(n+1-k)}\}_{k=\{1,2,\cdots,K\}}).
\end{align*}

\section{Experimental validations}\label{sec.experiment}
Due to space constraints, we present only the key simulation results for maximum and median consensus in this section. Results for minimum consensus, top-$K$ aggregation, quantile consensus, and trimmed mean consensus, together with additional figures, are provided in the supplementary material\footnote{The code is publicly available at \url{https://github.com/Wenrui-Yu/privacy-preserving-nonlinear-aggregation}.}.
\subsection{Experimental setup}\label{ss:exp_setup}
Unless otherwise specified, we use random geometric graphs (RGGs)~\cite{penrose2003random} with $n=15$ nodes uniformly distributed over a $1\times 1$ square as the test topology. Each node $i$ holds a private value $s_i$ independently sampled from a Gaussian distribution $\mathcal{N}(0,1)$. To consistently identify which private values are more prone to leakage across Monte Carlo trials, we relabel agents according to the order statistics of their private values in each trial. Specifically, the sampled values $\{s_i\}_{i=0}^{n-1}$ are sorted in descending order and assigned to node indices, so that node $0$ holds the largest value, node $1$ the second largest, and so on. This relabeling is equivalent to a permutation of node indices, does not alter the joint distribution or independence of the private values, and is performed independently of the communication topology. Its sole purpose is to simplify the visualization of leakage statistics across trials.

For comparison, we consider three approaches:
\begin{itemize}
    \item \textbf{Non-private counterpart:} the original consensus dynamics without any masking or perturbation at initialization (i.e., all $z_{i\mid j}^{(0)}=0$ or equivalently, $\mu_z=0$ and $\sigma_z=0$ in our notation).
    \item \textbf{Existing DP-based approach:} following the noisy initialization strategy commonly adopted in the literature\footnote{To ensure a fair comparison, we employ Gaussian noise as the perturbation, whereas the original work utilizes Laplacian noise. Both types of noise exhibit the same trend, as demonstrated in Fig.~8 and 9 in the Supplementary.}~\cite{wang2018differentially}, we perturb private values once at $t=0$ via
    \begin{equation}\label{eq:dp_init}
        \hat{s}_i = s_i + \mathcal{N}(0,\sigma_s^2),
    \end{equation}
    where $\sigma_s$ denotes the standard deviation of the injected noise. The subsequent dynamics are identical to those of the non-private counterpart.
    \item \textbf{Proposed approach:} the privacy-preserving initialization and dynamics described in Sections~\ref{ss.max_sys} and~\ref{ss.med_sys}, parameterized by $(c, \mu_z)$.
\end{itemize}

We evaluate these approaches via (i) \emph{output accuracy} using MSE metric, and (ii) \emph{individual privacy} using mutual information, as defined in Section~\ref{ss.eval}. All results are averaged over $1000$ Monte Carlo trials, and mutual information is estimated using the NPEET toolbox~\cite{npeet}.

\subsection{Output accuracy: exact aggregation without utility loss}\label{ss:accuracy}
Fig.~\ref{fig:mse} reports the output accuracy (MSE) of maximum and median consensus under the proposed approach, the non-private counterpart, and the existing DP-based approach. The proposed approach is \emph{exact}: it converges to the true aggregate with vanishing MSE, matching the non-private counterpart. In contrast, the existing DP-based approach exhibits a nonzero steady-state MSE due to the injected noise, the error increases with the noise magnitude. This highlights a fundamental privacy-accuracy trade-off in existing DP-based approaches, which is avoided by the proposed approaches through exact aggregation.

\begin{figure}[ht]
    \centering
    \begin{minipage}{0.24\textwidth}
        \centering
        \includegraphics[width=\linewidth]{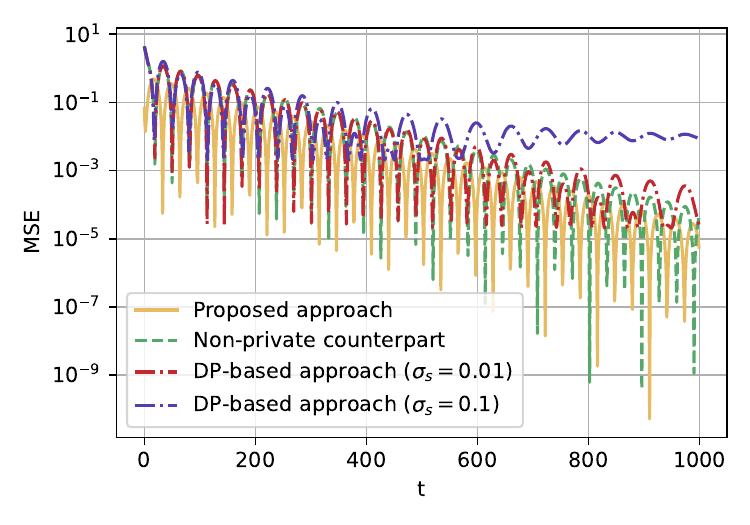}
        \\[-1ex] 
        {\small (a) Maximum consensus}
    \end{minipage}
    \hfill
    \begin{minipage}{0.24\textwidth}
        \centering
        \includegraphics[width=\linewidth]{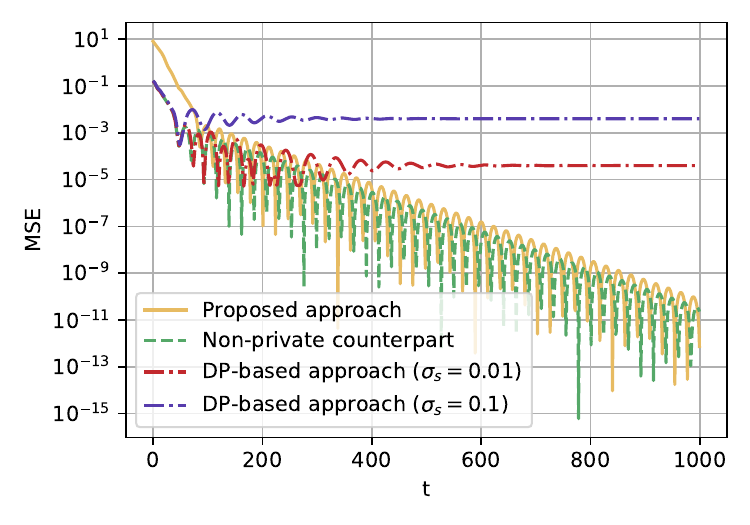}
        \\[-1ex]
        {\small (b) Median consensus}
    \end{minipage}
    \caption{Output accuracy (MSE) of maximum and median consensus under the proposed approach, the non-private counterpart, and the existing DP-based approaches.}
    \label{fig:mse}
\end{figure}

\subsection{Individual privacy}\label{ss:privacy}
We now evaluate individual privacy, focusing on how much information about each private value $s_i$ can be inferred from the observed state evolution. Although the privacy bounds derived in Theorem \ref{thm.max_prime} and \ref{thm.med_l1} involve multiple parameters and case distinctions, their operational meaning admits a simple interpretation: individual information leakage occurs when a \emph{matching event} happens, i.e., when $x_i^{(t)} = s_i$ for some time~$t$. Once such an event occurs, the corresponding private value is directly exposed through the state trajectory. This interpretation motivates our per-node privacy metrics. We focus on the state trajectory $\{x_i^{(t)}\}_{t\ge 0}$ of each node and quantify leakage using the normalized mutual information (NMI)
$\frac{I(S_i;X_i^{(t)})}{I(S_i;S_i)}.$ In addition, we report the \emph{leakage probability}, defined as the leakage count normalized by the total number of Monte Carlo trials ($1000$), where a leakage corresponds to the occurrence of a matching event, i.e., $x_i^{(t)} = s_i$ for some $t$. 

Specifically, in what follows we will first examine whether the empirical leakage behavior is consistent with the privacy bounds predicted by the theoretical analysis. We then compare the privacy of the proposed approach with the non-private counterpart and the existing DP-based approach. 

\subsection*{Part I. Consistency with theoretical privacy bounds}\label{ss:privacy_bound}
\textbf{Maximum consensus.} Fig.~\ref{fig:nmi_maximum} confirms the results in Theorem~\ref{thm.max_prime} and Lemma~\ref{lm.1}.  We highlight three observations.
First, the node holding the maximum value inevitably incurs full leakage at convergence (NMI reaches $1$), since the aggregation output itself is disclosed by design.
Second, nodes whose private values are closer to the maximum are more prone to leakage (see Fig.~\ref{fig:nmi_maximum} (a) and (b)). This is expected by the transient envelope characterized in~\eqref{eq.xi_up}: during oscillations around the maximum, trajectories are more likely to intersect values near the target, increasing the chance of a matching event $x_i^{(t)}=s_i$. 
Third, avoidable leakage can be controlled by parameters. By increasing $c$ and choosing a compatible initialization offset $\mu_z$, the sufficient conditions in Lemma~\ref{lm.1} are effectively satisfied, suppressing matching events.  Consistent with this prediction, Fig.~\ref{fig:nmi_maximum}(a)-(d) show that as $c$ is increased progressively (together with a compatible choice of $\mu_z$), the leakage probability decreases monotonically and eventually becomes negligible. 

\begin{figure}[ht]
    \centering
    \begin{minipage}{0.24\textwidth}
        \centering
        \includegraphics[width=\linewidth]{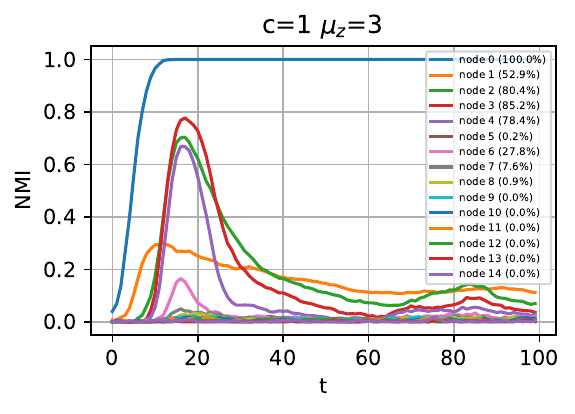}
        \\[-1ex]{\small (a)}
    \end{minipage}
    \begin{minipage}{0.24\textwidth}
        \centering
        \includegraphics[width=\linewidth]{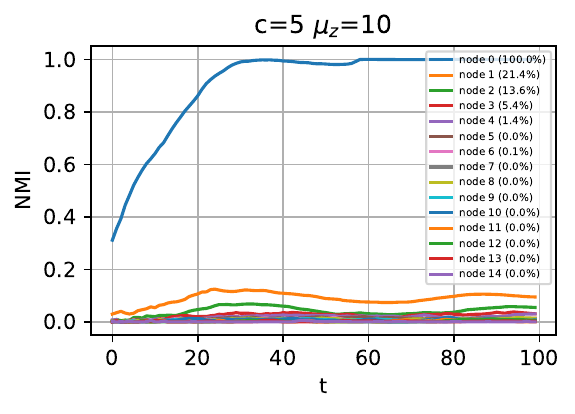}
        \\[-1ex]{\small (b)}
    \end{minipage}
       \hfill
    \begin{minipage}{0.24\textwidth}
        \centering
        \includegraphics[width=\linewidth]{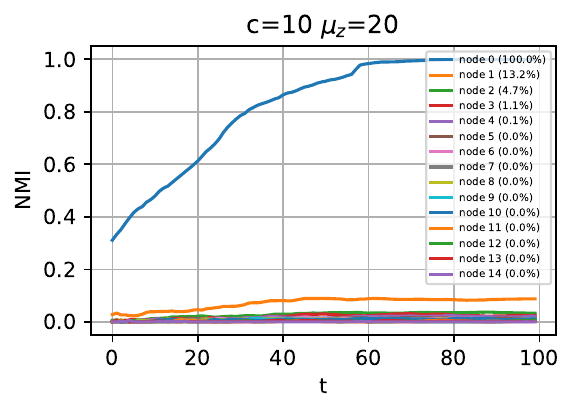}
        \\[-1ex]{\small (c)}
    \end{minipage}
    \hfill
    \begin{minipage}{0.24\textwidth}
        \centering
        \includegraphics[width=\linewidth]{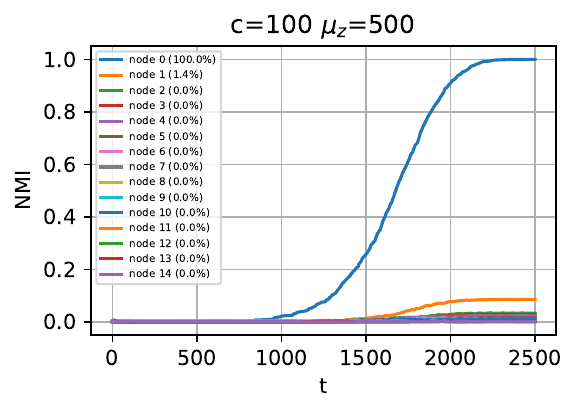}
        \\[-1ex]{\small (d)}
    \end{minipage}
    \caption{Maximum consensus: NMI $\frac{I(S_i;X_i^{(t)})}{I(S_i;S_i)}$ under different initialization settings. Values in parentheses indicate the empirical leakage probability of each node over $1000$ Monte Carlo trials.}
    \label{fig:nmi_maximum}
\end{figure}

\textbf{Median consensus.}
Fig.~\ref{fig:nmi_median_prior} is to validate Theorem~\ref{thm.med_l1}. We also highlight three observations.
First, the node holding the median value inevitably incurs full leakage at convergence (NMI reaches $1$), as expected. 
Second, similar to maximum consensus, non-median nodes whose private values are closer to the median exhibit higher leakage probabilities (see Fig.~\ref{fig:nmi_median_prior} (a),(b)), because values near the target are more likely to be intersected during the transient evolution, thereby triggering a matching event $x_i^{(t)}=s_i$.

More importantly, we note that Theorem~\ref{thm.med_l1} highlights a fundamental difficulty specific to median consensus: to achieve the ideal-world privacy bound one needs to satisfy \emph{two} privacy conditions simultaneously (encoded by the sets $\mathcal{V}_{p1}$ and $\mathcal{V}_{p2}$), reflecting constraints from both sides of the median. Without prior information to guide the choice of initialization, it is very challenging to keep all trajectories away from the sensitive interval
$
\Big[\tfrac{ -1- \sum_{  j \in {\mathcal N}_i} a_{  ij} z_{  i|j}^{(t)}}{c d_i},\;
      \tfrac{ 1- \sum_{  j \in {\mathcal N}_i} a_{  ij} z_{  i|j}^{(t)}}{c d_i}\Big],
$
thus matching events (and hence leakage) may still occur for some non-median nodes, e.g., Fig.~\ref{fig:nmi_median_prior} (a),(b).

We therefore consider two representative cases. In Fig.~\ref{fig:nmi_median_prior}(c), we deliberately enforce only a one-sided condition in Theorem~\ref{thm.med_l1} through the initialization choice, which guarantees privacy for the nodes on one side of the median and yields a partial (roughly half) privacy guarantee. In Fig.~\ref{fig:nmi_median_prior} (d), we instead assume prior  information and construct a bimodal setting (used only in this subplot) with $\lfloor n/2 \rfloor$ nodes drawn from $\mathcal{N}(-5,1)$, $\lfloor n/2 \rfloor$ nodes drawn from $\mathcal{N}(5,1)$, and one node drawn from $\mathcal{N}(0,1)$. This prior information enables a consistent initialization near the median, so that trajectories avoid the sensitive interval for all non-median nodes, leading to zero avoidable leakage in Fig.~\ref{fig:nmi_median_prior}(d).

\textbf{The impact of topology.}
Network topology also plays a significant role in privacy protection. As shown in Fig.~\ref{fig:nmi_mediantopo}(a) and (b), for maximum consensus, denser networks—with larger node degrees $d_i$—exhibit lower leakage frequencies for non-maximum nodes. As indicated by~\eqref{eq.xi_up}, increasing $d_i$ leads to a smaller initialization value $x^{(0)}$ and a slower convergence rate, both of which reduce the likelihood of matching events. A similar trend is observed for median consensus. Figs.~\ref{fig:nmi_mediantopo}(c) and (d) show that denser networks also yield lower leakage frequencies. This behavior is consistent with the theoretical analysis in Section~\ref{ss.med_sys}: the width of the sensitive leakage window, given by $\frac{2}{c d_i}$, shrinks as the node degree $d_i$ increases, thereby reducing the probability that state trajectories enter the critical region and trigger information leakage.

Hence, the above results provide strong empirical validation of the theoretical analysis, confirming that the privacy bounds and leakage mechanisms characterized in Theorems~\ref{thm.max_prime} and~\ref{thm.med_l1} accurately predict the observed privacy behavior under varying parameters and network topologies.

\begin{figure}[ht]
    \centering
    \begin{minipage}{0.24\textwidth}
        \centering
        \includegraphics[width=\linewidth]{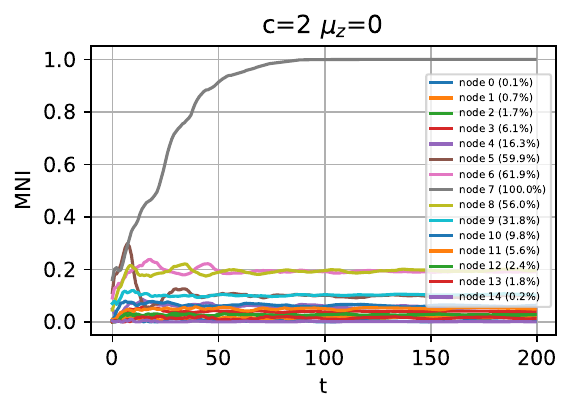}
        \\[-1ex] 
        {\small (a)}
    \end{minipage}
    \hfill
    \begin{minipage}{0.24\textwidth}
        \centering
        \includegraphics[width=\linewidth]{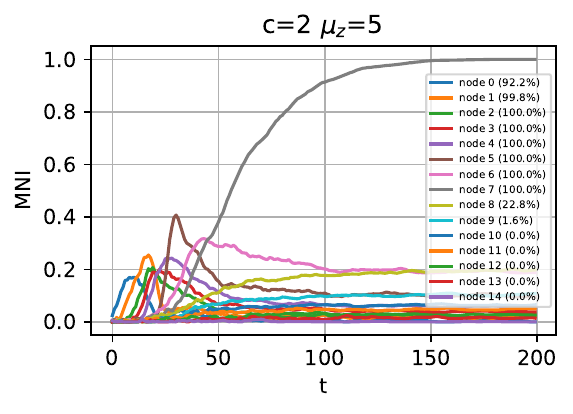}
        \\[-1ex]{\small (b)}
    \end{minipage}
    \begin{minipage}{0.24\textwidth}
        \centering
        \includegraphics[width=\linewidth]{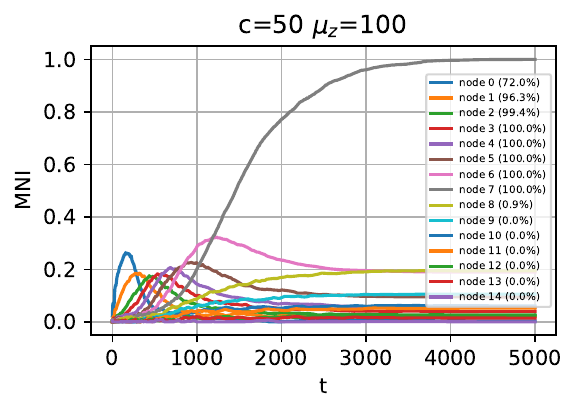}
        \\[-1ex]{\small (c)}
    \end{minipage}
    \hfill
    \begin{minipage}{0.24\textwidth}
        \centering
        \includegraphics[width=\linewidth]{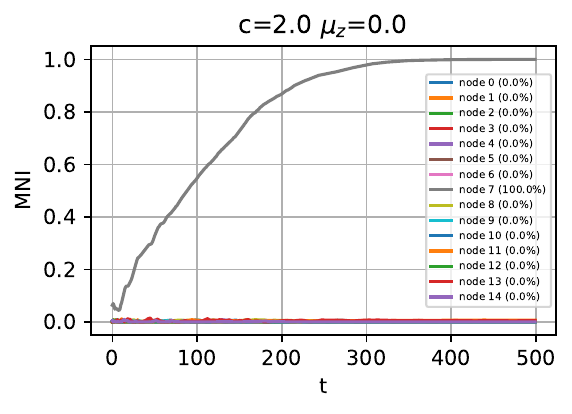}
        \\[-1ex]{\small (d)}
    \end{minipage}
    \caption{Medain consensus: NMI $\frac{I(S_i;X_i^{(t)})}{I(S_i;S_i)}$ under different initializations. Values in parentheses indicate the empirical leakage probability of each node over $1000$ Monte Carlo trials.}
    \label{fig:nmi_median_prior}
\end{figure}

\begin{figure}[ht]
    \centering
    \begin{minipage}{0.24\textwidth}
        \centering
        \includegraphics[width=\linewidth]{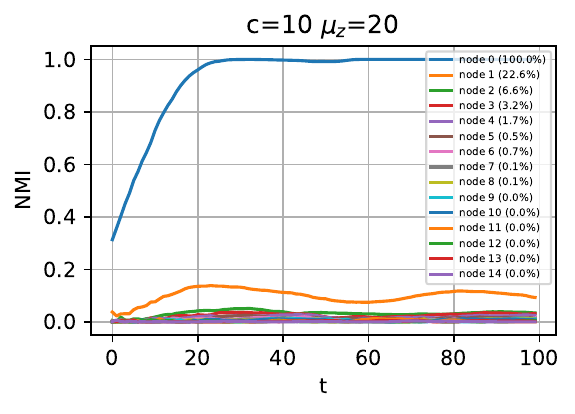}
        \\[-1ex]{\footnotesize (a) Maximum, ring ($d_i=2$)}
    \end{minipage}
    \hfill
    \begin{minipage}{0.24\textwidth}
        \centering
                \includegraphics[width=\linewidth]{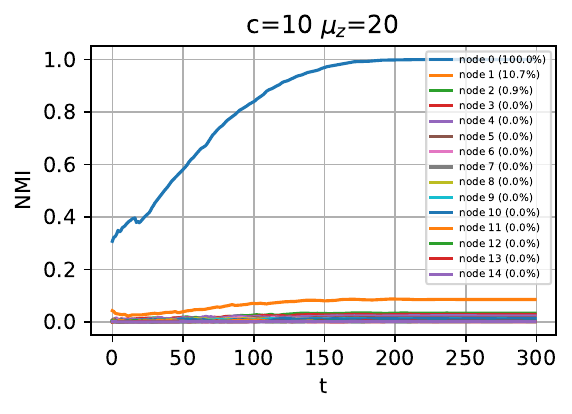}
        \\[-1ex]{\scriptsize (b) Maximum, complete ($d_i=n-1$)}
    \end{minipage}
    \begin{minipage}{0.24\textwidth}
        \centering
        \includegraphics[width=\linewidth]{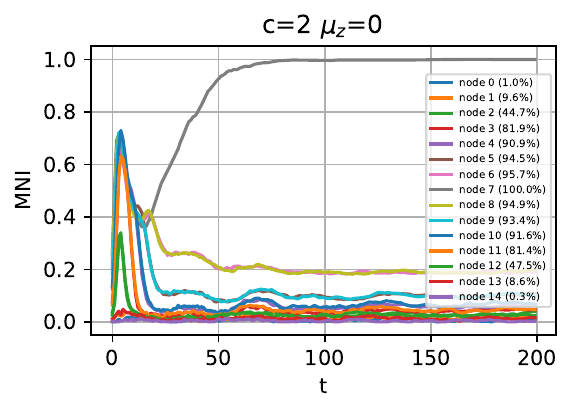}
        \\[-1ex]{{\footnotesize (c) Median, ring ($d_i=2$)}}
    \end{minipage}
    \hfill
    \begin{minipage}{0.24\textwidth}
        \centering
        \includegraphics[width=\linewidth]{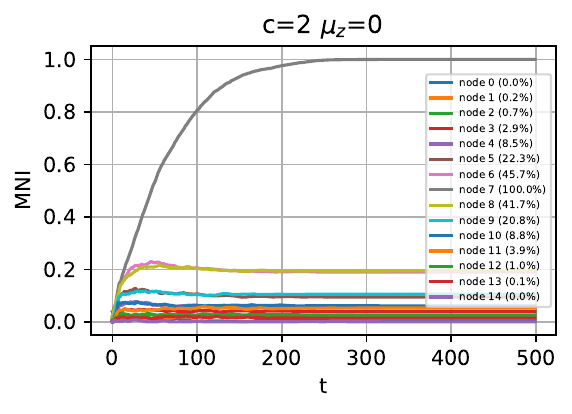}
        \\[-1ex]{\footnotesize (d) Median, complete ($d_i=n-1$)}
    \end{minipage}
    \caption{NMI $\frac{I(S_i;X_i^{(t)})}{I(S_i;S_i)}$ under ring topology and complete topology for both maximum and median consensus.}
    \label{fig:nmi_mediantopo}
\end{figure}

\subsection*{Part II. Comparison with the non-private counterpart and existing DP-based approaches}\label{ss:privacy_comp}

To further demonstrate the advantages of the proposed approach in terms of individual privacy, we compare it with the non-private counterpart and an existing DP-based approach. Fig.~\ref{fig:maximum_DP} reports the comparison for maximum consensus. The non-private counterpart exhibits substantial privacy leakage, with high NMI values for most nodes during the early iterations. The existing DP-based approach reduces the dependence between $X_i^{(t)}$ and $S_i$ by injecting noise at initialization; however, this privacy gain comes at the cost of persistent output error.

In contrast, the proposed approach effectively preserves the privacy of all non-target nodes, while still achieving exact aggregation accuracy (cf. Fig.~\ref{fig:nmi_maximum}(d)). Similar conclusions hold for median consensus, with corresponding results provided in Fig.~7 of the supplementary material.

\begin{figure}[ht]
    \centering
    \begin{minipage}{0.24\textwidth}
        \centering
        \includegraphics[width=\linewidth]{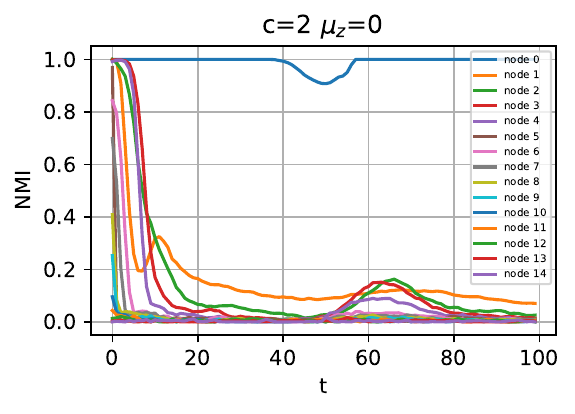}
        \\[-1ex]{\small (a) Non-private counterpart}
    \end{minipage}
    \hfill
    \begin{minipage}{0.24\textwidth}
        \centering
        \includegraphics[width=\linewidth]{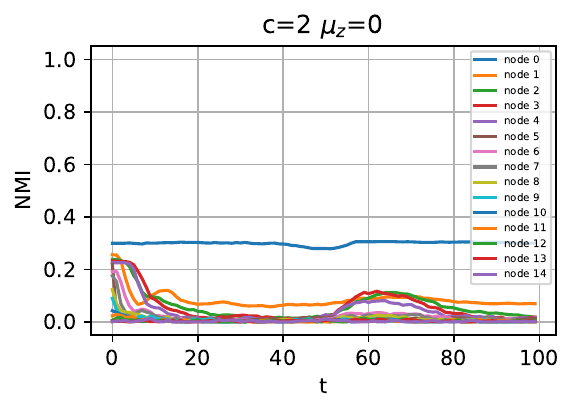}
        \\[-1ex]{\small (b) Existing DP-based approach }
    \end{minipage}
    \caption{Maximum consensus: NMI $\frac{I(S_i;X_i^{(t)})}{I(S_i;S_i)}$ for (a) the non-private counterpart and (b) the existing DP-based approach ($\sigma_s=0.1$).}
    \label{fig:maximum_DP}
\end{figure}

\section{Conclusion}\label{sec.conclusion}
In this work, we conducted a comprehensive information-theoretic privacy analysis of distributed nonlinear aggregation systems. Using mutual information, we established both fundamental privacy-leakage limits and the attainable leakage for various aggregation applications, showing that the ideal bound can be approached asymptotically under appropriate conditions. Simulation results validate the theory and illustrate how privacy depends on algorithmic and network parameters, offering practical insights for the design of privacy-preserving nonlinear aggregation schemes. 

\bibliographystyle{IEEEbib}
\bibliography{refs}

\appendices
\section{Proof of Proposition~\ref{prop:max_leak}}\label{app:proofs_max}
\begin{proof}
    Consider a distributed algorithm computing $s_{\max}=\max_{j\in\mathcal{V}} s_j$. Without loss of generality we can assume that the algorithm consists of $t_{max}$ rounds and that during each round $0\leq t\leq t_{max}$ a message is sent from node $i$ to node $j$ if $(i,j)\in \mathcal{E}$. We denote this message by $m_{i\to j}^{(t)}$ and remark that $m_{i\to j}^{(t)}$ can be an empty string if no message is sent from node $i$ to $j$ during round $t$. Node $i$ might need to choose some randomness during the computation and the random choices are collected in the vector $\mathbf{r}_i$. We remark, that any message sent from node $i$ in round $t$ can only depend on $s_i,\mathbf{r}_i$ and $m_{j\to i}^{\tau}$, $m_{i\to j}^{\tau}$ for $0\leq \tau<t$. With this notation note that an adversary eavesdropping every message and passively corrupting the nodes in $\mathcal{V}_c$ will have the knowledge of 
    \begin{align*}
        \mathcal{O}=\{\{S_i,R_i\}_{i\in \mathcal{V}_c},\{M_{i\to j}^{(t)}\}_{(i,j)\in \mathcal{E}},S_{max}\}
    \end{align*}
    After an execution of the algorithm, the adversary will possess an outcome of $\mathcal{O}$ given by
    \begin{align*}
        \{\{s_i,\mathbf{r}_i\}_{i\in \mathcal{V}_c},\{m_{i\to j}^{(t)}\}_{(i,j)\in \mathcal{E}},s_{max}\}.
    \end{align*}
    Remark that since all nodes learns the value $s_{max}$ each node $i$ should be able to compute this value from $s_i,\mathbf{r}_i$, and $\{m_{j\to i}^{(t)}\}_{j\in \mathcal{N}_i,t=0,1,\ldots,t_{max}}$. 
    To characterize which inputs of an honest node $k$ are consistent with the observed transcript, the adversary may perform a consistency test by re-running the algorithm with a trial value $\tilde{s}_k$ for node $k\in\mathcal{V}_h$. All messages of nodes other than $k$ are fixed to those in the original execution. The adversary then seeks a randomness realization $\tilde{\mathbf{r}}_k$ such that the messages generated by node $k$ match the observed ones.
Such a pair always exists at least for the true input, namely $\tilde{s}_k=s_k$ and $\tilde{\mathbf{r}}_k=\mathbf{r}_k$.\footnote{For privacy of node $k$, there should ideally exist multiple values of $\tilde{s}_k$ that pass this consistency test; otherwise the adversary can infer $s_k$.}

    If $s_k=s_{\max}$ (i.e., node $k$ attains the maximum), then any consistent re-execution must satisfy $\tilde{s}_k=s_k$. 
Indeed, if $\tilde{s}_k\neq s_k$ while all other inputs remain unchanged, the maximum of the input vector changes (it becomes strictly smaller if $\tilde{s}_k<s_k$ or strictly larger if $\tilde{s}_k>s_k$), so the correct output must change. But a consistent re-execution yields the same transcript $\mathsf{T}$ and therefore the same output, a contradiction. Hence $\tilde{s}_k=s_k$ is necessary. In contrast, if $s_k<s_{\max}$, then choosing any $\tilde{s}_k<s_{\max}$ does not change the true maximum, so consistency is not ruled out by correctness of the output (in particular, $\tilde{s}_k=s_k$ is always consistent).

Therefore, the unique node for which no alternative value $\tilde{s}_k\neq s_k$ can be consistent with the observed transcript is precisely the node attaining $s_{\max}$. Hence the adversary can identify the node having the maximal value.
\end{proof}

\section{Proof of Proposition~\ref{prop:med_leak}}\label{app:proofs_med}
\begin{proof}
    The arguments are very similar to the maximum case, where the adversary can re-run the algorithm re-using the messages for all except one node. It is not possible to get a successful re-run if $\tilde{s}_i$ is on the other side of the median than $s_i$. And for the node holding the median value it is not possible to choose any other $\tilde{s}_i$ than $s_i$. Hence, the adversary learns which node is holding the median value and for each node, it learns if $s_i$ is above or below the median value.  
\end{proof}

\section{Proof of Theorem~\ref{thm.max_prime}}\label{app:max_prime}

\begin{proof}

For node $i\in\mathcal{V}_p$, we have
\begin{align} \label{eq.x_diff}
     x_j^{(t+1)}-x_j^{(t)} &=\frac{-\sum_{  k \in {\mathcal N}_j} a_{  j k} (z_{  j|k}^{(t+1)}-  z_{  j|k}^{(t)})}{cd_j}.
\end{align}
Also from \eqref{eq.zji_up}, $z_{j\mid i}^{(t+1)}-z_{j\mid i}^{(t)}$ can be rewritten as 
\begin{align} 
  \nonumber z_{j\mid i}^{(t+1)}-z_{j\mid i}^{(t)}&=ca_{ij}x_i^{(t)}-\frac{1}{2}ca_{ij}x_i^{(t-1)}+\frac{1}{2}ca_{ji}x_j^{(t-1)}
 \\
&= ca_{ij} \left(x_i^{(t)} - \frac{1}{2}x_i^{(t-1)} - \frac{1}{2}x_j^{(t-1)}\right).
\label{eq.z_diff} 
\end{align}
As a result $x_j^{(t+1)}$ can be computed from $ x_j^{(t)}$ and $ x_j^{(t-1)}$ and $ x_k^{(t)}$ and $ x_k^{(t-1)}$ for neighboring nodes $k$.

Firstly, we can separate $I(S_i;\mathcal{O})$ as
\begin{align*}
    I(S_i;\mathcal{O})\overset{(a)}{=}&I(S_i;\mathcal{O},Y_i)\overset{(b)}{=}I(S_i;Y_i)+I(S_i;\mathcal{O}|Y_i)\\
    \overset{(c)}{=}&I(S_i;Y_i)+P(Y_i=0)I(S_i;\mathcal{O}|Y_i=0)\\
    &+P(Y_i=1)I(S_i;\mathcal{O}|Y_i=1)
\end{align*}
where (a) follows from the fact that $Y_i$ can be inferred by observing whether 
$X^{(t)}$ is computed from $X^{(t-1)}$ and $X^{(t-2)}$ by \eqref{eq.x_diff} and \eqref{eq.z_diff}; 
(b) follows from the chain rule of mutual information, 
$I(A;B,C)= I(A;B \mid C) + I(A;C)$; 
and (c) follows from the law of total expectation.

Let $i \in \mathcal{V} \setminus \mathcal{V}_P$. According to \eqref{eq.xi_up}, there exists some iteration $t \in \mathcal{T}$ such that $x_i^{(t)} = s_i$. Therefore, we have
\begin{align*}
    I(S_i;\mathcal{O}|Y_i=0)=&I(S_i;S_i\mid Y_i=0)
\end{align*}

For node $i\in\mathcal{V}_p$, we can obtain
\begin{align*}
    &I(S_i;\mathcal{O}|Y_i=1)\\
    \overset{(a)}{=}&I(S_i;\{S_j\}_{j\in\mathcal{V}_c},\{X_{j}^{(t)}\}_{j\in \mathcal{V},t\in\mathcal{T}},\{Z_{j\mid k}^{(0)}\}_{(j,k)\in\mathcal{E}}\mid Y_i=1)\\
    \overset{(b)}{=}&I(S_i;\{S_j\}_{j\in\mathcal{V}_c},\{Y_j\}_{j\in \mathcal{V}_p},\{Y_j,S_j\}_{j\in \mathcal{V}\backslash \mathcal{V}_p},\\
    &\{Z_{j\mid k}^{(0)}\}_{(j,k)\in\mathcal{E}}\mid Y_i=1)\\
    \overset{(c)}{=}&I(S_i;\{S_j\}_{j\in\mathcal{V}_c}\cup\{S_j\}_{j\in \mathcal{V}\backslash \mathcal{V}_p},\{Y_j\}_{j\in \mathcal{V}}\mid Y_i=1)\\
\end{align*}
where $(a)$ is by definition and \eqref{eq.z_diff}. $(b)$ holds because, for any node $j \in \mathcal{V}_p$, the variable $x_j^{(t)}$ is determined by the previous states $x_k^{(t-1)}$ and $x_k^{(t-2)}$ within its neighborhood $k \in \mathcal{N}_j \cup \{j\}$. By recursive application, $x_j^{(t)}$ can be fully reconstructed from the initial $z_{j\mid k}^{(0)}$. Consequently, the adversary can deduce the indicator variables $\{Y_j\}_{j\in \mathcal{V}}$ and the subset of private values $\{S_j\}_{j\in \mathcal{V}_p}$. $(c)$ comes from the independency of variable $Z$.

\end{proof}

\section{Proof of Lemma~\ref{lm.1}}\label{app:lm}
\begin{proof}
    For any honest node $i \in \mathcal{V}_h$ with $s_i \neq  s_{\max}$, we obtain
    \begin{align*}
&I(S_i;\mathcal{O})\\
&\overset{(a)}{=}I(S_i;Y_i)+I(S_i;\mathcal{O}|Y_i=1)\\
&\overset{(b)}{=}I(S_i;Y_i)+I(S_i;\{S_j\}_{j\in\mathcal{V}_c}\cup\{S_j\}_{j\in \mathcal{V}\backslash \mathcal{V}_p},\{Y_j\}_{j\in \mathcal{V}}\mid Y_i=1)\\
&\overset{(c)}{=}I(S_i;Y_i)+I(S_i;\{S_j\}_{j\in\mathcal{V}_c}\cup\{S_j\}_{j\in \mathcal{V}\backslash \mathcal{V}_p},\{Y_j\}_{j\in \mathcal{V}}\mid Y_i)\\
&\overset{(d)}{=}I(S_i;\{S_j\}_{j\in\mathcal{V}_c}\cup\{S_j\}_{j\in \mathcal{V}\backslash \mathcal{V}_p},\{Y_j\}_{j\in \mathcal{V}})\\
&\overset{(e)}{=}I(S_i;\{S_j\}_{j\in\mathcal{V}_c},S_{\max},\{Y_j\}_{j\in \mathcal{V}})\\
&\overset{(f)}{=}I(S_i;\{S_j\}_{j\in\mathcal{V}_c},S_{\max},\{\mathbb{I}_{S_j=S_{max}}(S_j)\}_{j\in \mathcal{V}})
\end{align*}
where $(a)$ comes from \eqref{eq.theo11} and $P(Y_i=1) = 1$; in $(b)$ we plug in \eqref{eq.theo13}. $(c)$ comes from the fact when $P(C=1) = 1$ the term $I(A;B|C=1)$ and $I(A;B|C)$ are equivalent. $(d)$ comes from $I(A;B,C)=I(A;C)+I(A;B|C)$. $(e)$ comes from the case when all nodes $j$ with $s_j \neq s_{\max}$ are in $\mathcal{V}_p$. $(f)$ arises from $c \gg 1$, , where the step  becomes infinitesimally small. With a proper initialization, the lowest point along the trajectory of $x_i$ asymptotically approaches $s_{\max}$.
\end{proof}

\clearpage

\section*{Supplementary material}
\subsection{Additional figures}


\begin{figure}[ht]
    \centering
    \begin{minipage}{0.24\textwidth}
        \centering
        \includegraphics[width=\linewidth]{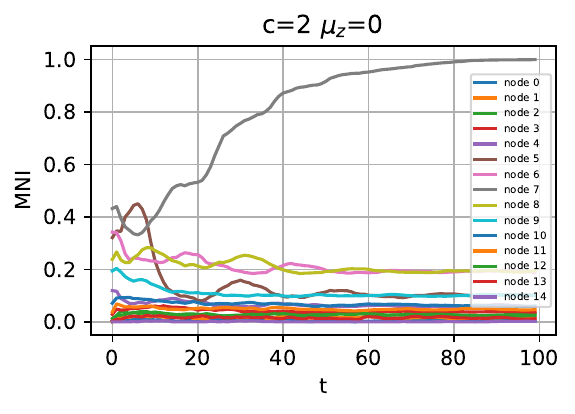}
        \\[-1ex]
        {\small (a) Non-private counterpart}
    \end{minipage}
    \hfill
    \begin{minipage}{0.24\textwidth}
        \centering
        \includegraphics[width=\linewidth]{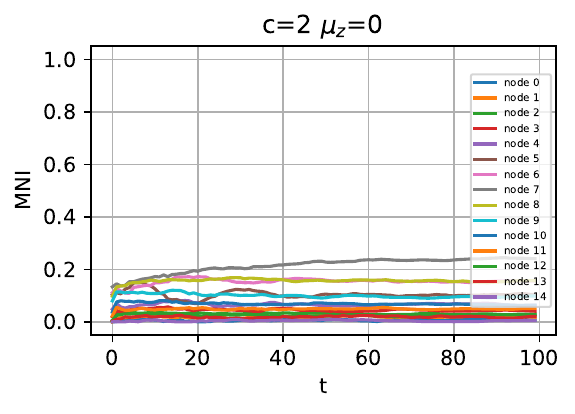}
        \\[-1ex]
        {\small (b) Existing DP-based approach}
    \end{minipage}

    \caption{Median consensus: NMI $\frac{I(S_i;X_i^{(t)})}{I(S_i;S_i)}$ for (a) the non-private counterpart and (b) the existing DP-based approach ($\sigma_s=0.1$).}
    \label{fig:median_DP}
\end{figure}

\subsection{DP with Laplacian noise}
Fig.~\ref{fig:mse_dp} shows the output accuracy under the DP-based approaches with Gaussian and Laplacian noise perturbation. 
\begin{figure}[ht]
    \centering
    \begin{minipage}{0.24\textwidth}
        \centering
        \includegraphics[width=\linewidth]{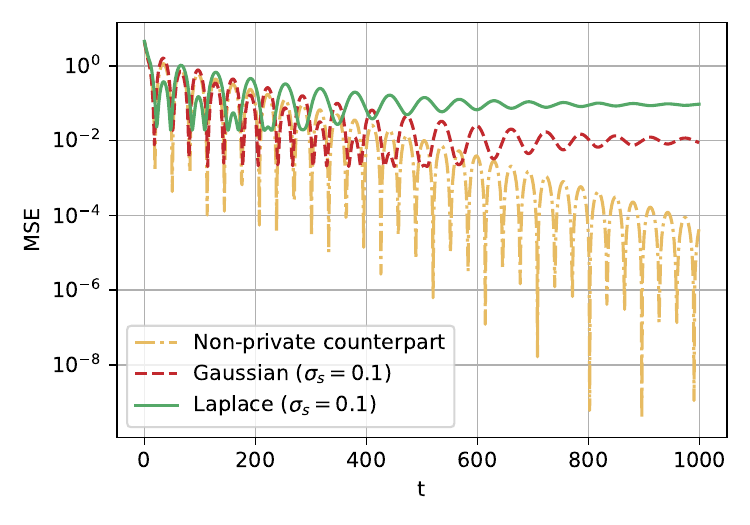}
        \\[-1ex] 
        {\small (a) Maximum consensus}
    \end{minipage}
    \hfill
    \begin{minipage}{0.24\textwidth}
        \centering
        \includegraphics[width=\linewidth]{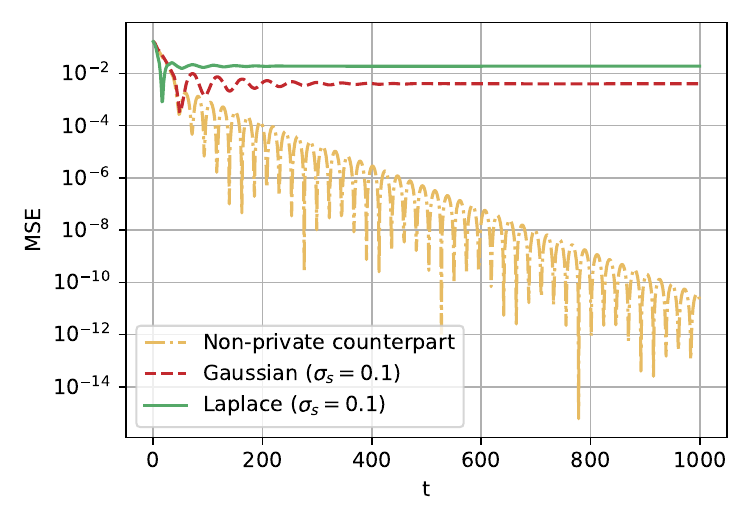}
        \\[-1ex]
        {\small (b) Median consensus}
    \end{minipage}
    \caption{Output accuracy (MSE) of maximum and median consensus under the non-private counterpart and the DP-based approaches with Gaussian and Laplacian noise perturbation.}
    \label{fig:mse_dp}
\end{figure}

Fig.~\ref{fig:nmi_dp} shows the Laplacian-noise variant of the DP approach, under the same settings as in Figs.~\ref{fig:maximum_DP}(b) and~\ref{fig:median_DP}(b).
\begin{figure}[ht]
    \centering
    \begin{minipage}{0.24\textwidth}
        \centering
        \includegraphics[width=\linewidth]{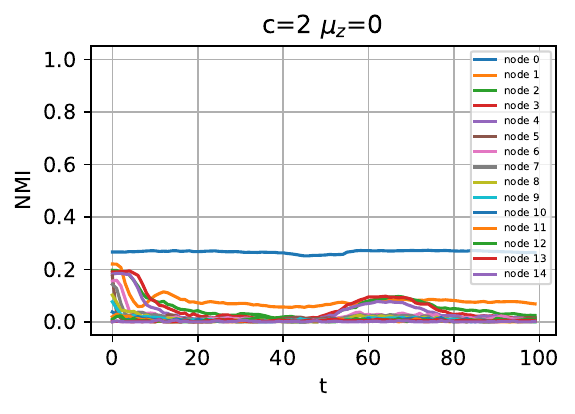}
        \\[-1ex]{\small (a) Maximum consensus with Laplacian perturbation ($\sigma_s=0.1$)}
    \end{minipage}
    \hfill
    \begin{minipage}{0.24\textwidth}
        \centering
        \includegraphics[width=\linewidth]{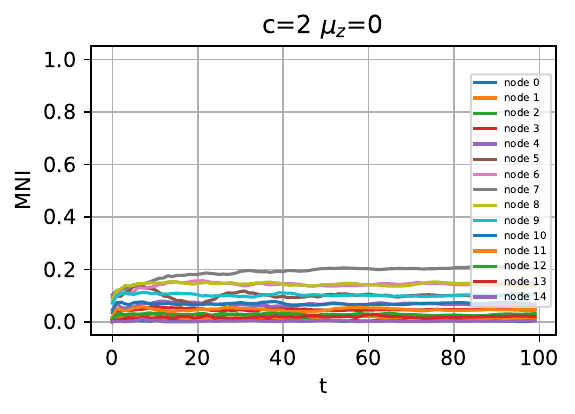}
        \\[-1ex]{\small (b) Median consensus with Laplacian perturbation ($\sigma_s=0.1$)}
    \end{minipage}
    \caption{NMI $\frac{I(S_i;X_i^{(t)})}{I(S_i;S_i)}$ of Laplacian perturbation for (a) maximum consensus and (b) median consensus.}
    \label{fig:nmi_dp}
\end{figure}

\subsection{Other results}

As shown in Fig.~\ref{fig:min}, we present the experimental results for the minimum consensus case. In contrast to the maximum consensus, the soft adjustment of $z^{(0)}$ drives the trajectory of $x$ to gradually increase from a point below all private values and eventually converge to the minimum value. The behavior of NMI also exhibits a distinct pattern, which only the node associated with the minimum private value sees its NMI approach $1$ upon convergence, while all other nodes remains low.
\begin{figure}[ht]
    \centering
    \begin{minipage}{0.24\textwidth}
        \centering
        \includegraphics[width=\linewidth]{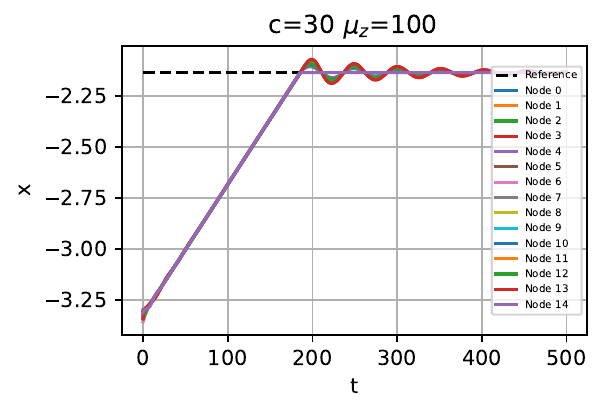}
        \\[-1ex]
        {\small (a)}
    \end{minipage}
    \hfill
    \begin{minipage}{0.24\textwidth}
        \centering
        \includegraphics[width=\linewidth]{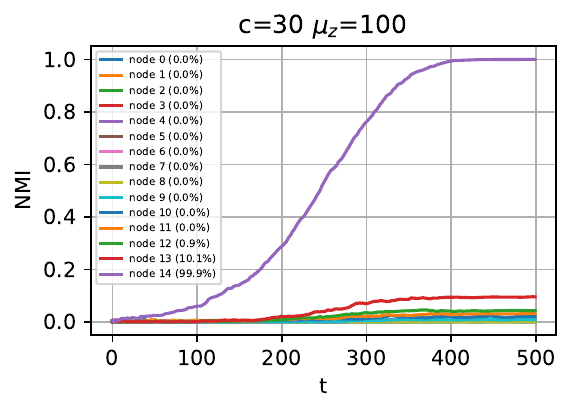}
        \\[-1ex]
        {\small (b)}
    \end{minipage}
    \caption{Trajectory and NMI
    $\frac{I(S_i;X_i^{(t)})}{I(S_i;S_i)}$ of minimum consensus.}
    \label{fig:min}
\end{figure}

Fig.~\ref{fig:topk} illustrates the process of obtaining the top-
$K$ values within a distributed network. It can be seen that this process is similar to the maximum consensus, sequentially identifying the largest value, the second largest, and so on. Apart from the top-$K$ nodes, the NMI of all other nodes remains close to 0. Since the iteration is performed for a finite number of rounds, we set $\epsilon=1e-3$ as the tolerance for determining convergence. This may cause slight misclassification in some experiments, leading to the mutual information of the second-largest node being lower than $1$.
\begin{figure}[ht]
    \centering
    \begin{minipage}{0.24\textwidth}
        \centering
        \includegraphics[width=\linewidth]{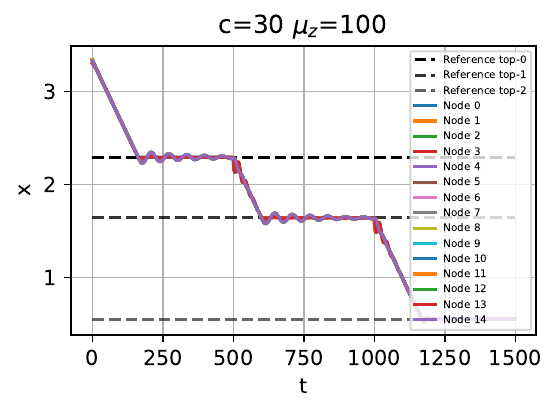}
        \\[-1ex]
        {\small (a)}
    \end{minipage}
    \hfill
    \begin{minipage}{0.24\textwidth}
        \centering
        \includegraphics[width=\linewidth]{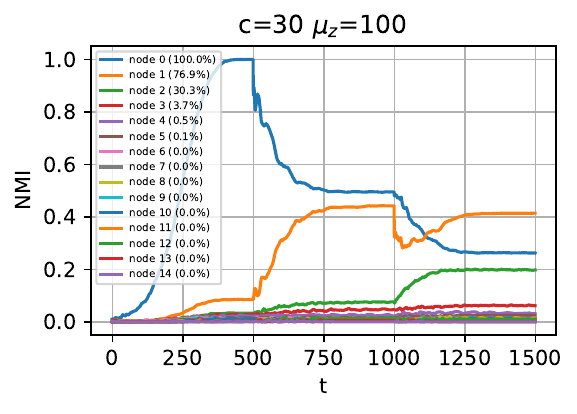}
        \\[-1ex]
        {\small (b)}
    \end{minipage}

    \caption{Trajectory and NMI
    $\frac{I(S_i;X_i^{(t)})}{I(S_i;S_i)}$ of top-$K$ aggregation ($K=3$).}
    \label{fig:topk}
\end{figure}

Fig.~\ref{fig:quantile} shows the quantile consensus process (for the 5th-ranked value). By initializing $x$ from a relatively large value, the trajectory converges downward, which implies a higher risk of information leakage for larger values.
\begin{figure}[ht]
    \centering
    \begin{minipage}{0.24\textwidth}
        \centering
        \includegraphics[width=\linewidth]{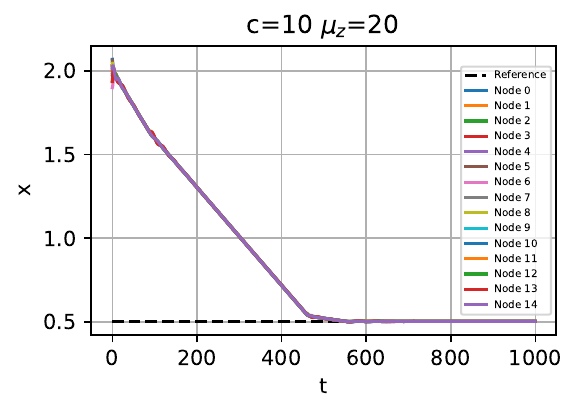}
        \\[-1ex]
        {\small (a)}
    \end{minipage}
    \hfill
    \begin{minipage}{0.24\textwidth}
        \centering
        \includegraphics[width=\linewidth]{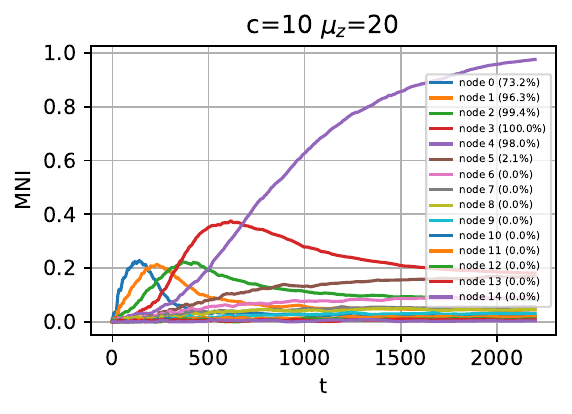}
        \\[-1ex]
        {\small (b)}
    \end{minipage}

    \caption{Trajectory and NMI
    $\frac{I(S_i;X_i^{(t)})}{I(S_i;S_i)}$ of quantile consensus (the $5$th-ranked value with $n=15$, i.e. $q=\frac{2}{7}$).}
    \label{fig:quantile}
\end{figure}

Fig.~\ref{fig:trim} illustrates the process of obtaining the trimmed values after removing the extreme values at both ends. With appropriate initialization, only the extreme values reveal their private information, while the private values of the remaining nodes cannot be directly inferred from $x$.
\begin{figure}[ht]
    \centering
    \begin{minipage}{0.24\textwidth}
        \centering
        \includegraphics[width=\linewidth]{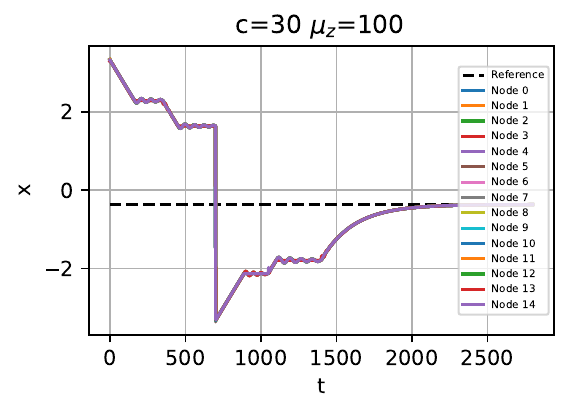}
        \\[-1ex]
        {\small (a)}
    \end{minipage}
    \hfill
    \begin{minipage}{0.24\textwidth}
        \centering
        \includegraphics[width=\linewidth]{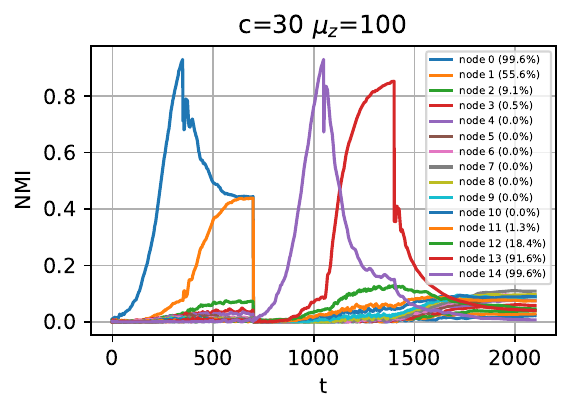}
        \\[-1ex]
        {\small (b)}
    \end{minipage}

    \caption{Trajectory and NMI
    $\frac{I(S_i;X_i^{(t)})}{I(S_i;S_i)}$ of trimmed mean consensus. Mean value computed after excluding the 2 largest and 2 smallest values.}
    \label{fig:trim}
\end{figure}

\end{document}